\documentclass[10pt,conference]{IEEEtran}

\IEEEoverridecommandlockouts

\usepackage[utf8]{inputenc}
\usepackage{amsmath,amssymb,amsfonts}
\usepackage{amsthm}
\usepackage{enumerate}
\usepackage{bm}
\usepackage{graphicx}
\usepackage{xcolor}

\newtheorem{proposition}{Proposition}
\newtheorem{theorem}{Theorem}
\newtheorem{corollary}{Corollary}
\newtheorem{remark}{Remark}

\newtheorem{assumption}{Assumption}

\title{Koopman Generator Decomposition for Port-Hamiltonian Systems}

\author{Victor M. Preciado,~\IEEEmembership{Senior Member,~IEEE}%
\thanks{V. M. Preciado is with the Department of Electrical and Systems Engineering,
University of Pennsylvania, Philadelphia, PA 19104 USA
(e-mail: \texttt{preciado@seas.upenn.edu}).}}

\begin{document}
\maketitle

\begin{abstract}
We study how the vector-field structure of nonlinear
port-Hamiltonian systems is reflected in the infinitesimal
Koopman generator. The generator admits a natural bracket
decomposition into a conservative interconnection-bracket
derivation, a dissipative metric-bracket derivation, and an input-port
derivation. The conservative component is formally skew-adjoint
on a test space whenever the conservative flow preserves the
reference measure and the relevant boundary terms vanish. The
dissipative component is not claimed to be a positive operator
on arbitrary observables; rather, the positive semidefinite
object is the metric bracket
$[f,f]_R=\nabla f^\intercal\mathbf{R}\nabla f\ge 0$, which yields the
exact port-Hamiltonian energy balance for the Hamiltonian
observable:
\[
    \mathcal{K}_{\mathbf{u}}\mathcal{H}
    =-[\mathcal{H},\mathcal{H}]_R
      +\mathbf{y}^\intercal\mathbf{u}
    \le \mathbf{y}^\intercal\mathbf{u}.
\]
We use these bracket identities to motivate finite-dimensional
weak Galerkin and data-driven lifted models: when the Galerkin
measure is conservative for the Hamiltonian interconnection flow
and boundary terms vanish, the conservative contribution is skew in
the Galerkin mass metric, while the dissipative bracket induces a
positive semidefinite Dirichlet matrix. These identities motivate
structure-preserving lifted port-Hamiltonian surrogates that are
passive and support damping injection in the lifted coordinates,
while distinguishing exact bracket identities, projection
residuals, finite-data estimation error, and the residual and
injectivity assumptions needed to transfer lifted conclusions back
to the original nonlinear state.
\end{abstract}

\section{Introduction}
\label{sec:intro}

\emph{Port-Hamiltonian} (pH) systems encode energy structure
through conservative interconnection, dissipation, and
power-conjugate ports \cite{van2006port,schaft2014port}. For a
Hamiltonian $\mathcal{H}$, the defining balance has the form
\[
  \dot{\mathcal{H}}
  = -(\nabla\mathcal{H})^\intercal
      \mathbf{R}\nabla\mathcal{H}
    + \mathbf{y}^\intercal\mathbf{u}
  \le \mathbf{y}^\intercal\mathbf{u},
\]
with $\mathbf{R}\succeq 0$. This identity makes pH models useful
for passivity-based design, including energy
shaping~\cite{ortega2001energyshaping}, damping
injection~\cite{ortega2001interconnection}, and modular
interconnection of physical subsystems. The same energy viewpoint
also extends to distributed-parameter and infinite-dimensional
settings~\cite{jacob2012linear,macchelli2004modeling}.

The \emph{Koopman operator} approach represents nonlinear state
dynamics through linear evolution of observables
\cite{koopman1931hamiltonian,mezic2005spectral,mezic2013analysis,
brunton2022modern}.
Its infinitesimal generator differentiates observables along
trajectories and is therefore a natural place to look for
structure inherited from the vector field. For conservative
Hamiltonian systems, the generator often has a skew structure
under suitable invariant measures and domain or boundary
assumptions \cite{mezic2005spectral}. Dissipative pH systems do
not generally preserve the same measures, so analogous Koopman
statements must separate conservative skewness from energy
dissipation more carefully.

This paper asks how the pH decomposition of the vector field
appears at the level of the Koopman generator. For a smooth
observable $f$, the pH dynamics induce the bracket decomposition
\[
  \mathcal{K}_{\mathbf{u}}f
  = \{f,\mathcal{H}\}_J - [f,\mathcal{H}]_R
    + \nabla f^\intercal \mathbf{G}\mathbf{u},
\]
where the interconnection bracket
$\{f,g\}_J=\nabla f^\intercal \mathbf{J}\nabla g$ represents the
conservative interconnection and the metric bracket
$[f,g]_R=\nabla f^\intercal \mathbf{R}\nabla g$ represents the
dissipative channel. The final term is the input-port derivation.

A central distinction is that the dissipative derivation
$f\mapsto [f,\mathcal{H}]_R$ is not a positive operator on
arbitrary observables. The positive object is
instead the metric, or Dirichlet, bracket
$[f,f]_R=\nabla f^\intercal\mathbf{R}\nabla f\ge 0$. Applying the
generator to the Hamiltonian observable gives the exact pH energy
identity
\[
  \mathcal{K}_{\mathbf{u}}\mathcal{H}
  = -[\mathcal{H},\mathcal{H}]_R
    + \mathbf{y}^\intercal\mathbf{u}
  \le \mathbf{y}^\intercal\mathbf{u}.
\]
Thus positivity enters through the bracket evaluated on matching
arguments, and through the Hamiltonian energy balance, rather than
through positivity of the first-order derivation on arbitrary
observables.

The same distinction matters in finite dimensions. A weak
Galerkin projection depends on the chosen dictionary and on the
Galerkin or sampling measure, and it includes a mass matrix unless
the dictionary is orthonormal. In this weak form, the conservative
contribution is skew in the mass metric when the Galerkin measure
is conservative for the interconnection flow and boundary terms
vanish, while the dissipative metric bracket yields a positive
semidefinite Dirichlet matrix.
Finite-data estimates of these matrices are approximate unless
structural constraints are imposed during identification; they
should be distinguished from exact bracket identities and from
projection residuals.

The resulting lifted models are finite-dimensional
port-Hamiltonian surrogates rather than exact copies of the full
Koopman generator. Passivity and damping-injection properties can
be certified for these surrogates in lifted coordinates using
their finite-dimensional storage functions. Conclusions about the
original nonlinear state require additional control of projection
or model residuals, together with injectivity or local
observability assumptions on the lifting map.

The main contributions are:
\textbf{(i)} a Koopman bracket decomposition for pH systems that
separates conservative, dissipative, and port derivations;
\textbf{(ii)} a precise distinction between formal skewness of the
conservative generator and positivity of the metric/Dirichlet
bracket, including the Hamiltonian energy-balance identity;
\textbf{(iii)} weak Galerkin identities showing mass-metric
skewness of the conservative contribution and positive
semidefiniteness of the dissipative bracket matrix;
\textbf{(iv)} a structure-preserving lifted pH surrogate model
with passivity and damping-injection guarantees in lifted
coordinates; and
\textbf{(v)} illustrative examples that distinguish exact
bracket identities, projection error, and finite-data estimation
error.

Connections between Koopman operators and Hamiltonian structure
appear in \cite{mezic2005spectral} for conservative systems, and
structure-preserving identification has been explored via LMI
stability constraints~\cite{junker2022} and symplectic
learning~\cite{yildiz2024}. Data-driven port-Hamiltonian
identification from snapshots has been addressed
in~\cite{morandin2023,cherifi2022}, and passive system learning
from data in~\cite{hara2021}. The present work uses these themes
to organize Koopman-based lifting around pH bracket identities,
weak projection formulas, and finite-dimensional passivity
constraints.

The paper is organized as follows. Section~\ref{sec:prelim}
reviews pH and Koopman preliminaries, assumptions, and notation.
Section~\ref{sec:decomposition} develops the bracket
decomposition. Section~\ref{sec:approximation} discusses weak
Galerkin identities and finite-dimensional surrogate
construction. Section~\ref{sec:control} treats passivity-based
control for the surrogate. Section~\ref{sec:example} presents
examples, and Section~\ref{sec:conclusion} concludes.

\section{Preliminaries}
\label{sec:prelim}

\subsection{Port-Hamiltonian Systems}

Let $\mathcal{X}\subseteq\mathbb{R}^n$ denote the state domain,
and let $\mathbf{x}\in\mathcal{X}$, $\mathbf{u}\in\mathbb{R}^m$,
and $\mathbf{y}\in\mathbb{R}^m$ denote the state, input, and
output. A nonlinear port-Hamiltonian (pH) system is defined by
\cite{van2006port,schaft2014port}
\begin{equation}
\begin{aligned}
  \dot{\mathbf{x}}
  &= \bigl(\mathbf{J}(\mathbf{x}) - \mathbf{R}(\mathbf{x})\bigr)
     \nabla\mathcal{H}(\mathbf{x}) + \mathbf{G}(\mathbf{x})\,\mathbf{u},\\
  \mathbf{y}
  &= \mathbf{G}(\mathbf{x})^\intercal\nabla\mathcal{H}(\mathbf{x}),
\end{aligned}
\label{eq:ph_nonlinear}
\end{equation}
where $\mathcal{H}:\mathcal{X}\!\to\mathbb{R}$ is the
Hamiltonian, $\mathbf{J}(\mathbf{x})=-\mathbf{J}(\mathbf{x})^\intercal$
is the skew-symmetric structure matrix, $\mathbf{R}(\mathbf{x})
=\mathbf{R}(\mathbf{x})^\intercal\succeq 0$ is the dissipation
matrix, and $\mathbf{G}(\mathbf{x})$ is the port map.
Differentiating $\mathcal{H}$ along classical trajectories yields
the pointwise energy-balance relation
\begin{equation}
  \dot{\mathcal{H}}
  = \mathbf{y}^\intercal\mathbf{u}
    - (\nabla\mathcal{H})^\intercal\mathbf{R}(\mathbf{x})\,\nabla\mathcal{H}
  \;\le\; \mathbf{y}^\intercal\mathbf{u},
  \label{eq:ph_energy_balance}
\end{equation}
implying passivity: the system cannot generate energy internally
and can only store or dissipate the power supplied at its
ports~\cite{willems1972dissipative,van2006port}.

The matrix $\mathbf{J}$ gives lossless internal exchange:
$\xi^\intercal\mathbf{J}(\mathbf{x})\xi=0$ for all
$\xi\in\mathbb{R}^n$. The matrix $\mathbf{R}$ gives nonnegative
Hamiltonian dissipation through
$(\nabla\mathcal{H})^\intercal\mathbf{R}\nabla\mathcal{H}\ge 0$.
These structures will reappear below as pointwise bracket
identities for Koopman-generator derivations.

The vector-field decomposition is recorded as follows:
\begin{equation}
\begin{aligned}
  \mathbf{v}_J(\mathbf{x})
  &\triangleq \mathbf{J}(\mathbf{x})\nabla\mathcal{H}(\mathbf{x}),\\
  \mathbf{v}_R(\mathbf{x})
  &\triangleq \mathbf{R}(\mathbf{x})\nabla\mathcal{H}(\mathbf{x}),\\
  \mathbf{v}_0(\mathbf{x})
  &\triangleq \mathbf{v}_J(\mathbf{x})-\mathbf{v}_R(\mathbf{x}),\\
  \mathbf{v}_u(\mathbf{x},\mathbf{u})
  &\triangleq \mathbf{G}(\mathbf{x})\mathbf{u}.
\end{aligned}
\label{eq:ph_vector_fields}
\end{equation}
The unforced drift is $\mathbf{v}_0$.

\begin{assumption}[State space, test functions, and domains]
\label{ass:functional_setting}
We assume $\mathcal{H}\in C^2(\mathcal{X})$ and that
$\mathbf{J}$, $\mathbf{R}$, and $\mathbf{G}$ are sufficiently
smooth for the displayed pointwise identities to hold. Pointwise
generator identities use observables $f\in C^1(\mathcal{X})$.
Weak identities involving integration by parts use the test space
$\mathcal{V}=C_c^\infty(\mathcal{X})$, or smooth functions
satisfying boundary conditions that make all boundary terms
vanish. When a statement treats $\mathcal{H}$ itself as an
element of $\mathcal{D}(\mathcal{K})\subset L^2(\eta)$ for a
reference measure $\eta$, we also assume
$\mathcal{H}\in L^2(\eta)$ and
$\mathcal{K}\mathcal{H}\in L^2(\eta)$. Otherwise,
\eqref{eq:ph_energy_balance} is interpreted pointwise along
classical trajectories.
\end{assumption}


\smallskip
\noindent\emph{Brackets associated with the pH structure.}
For smooth observables $f,g$, define
\begin{equation}
  \{f,g\}_J
  \triangleq \nabla f^\intercal \mathbf{J}\nabla g,
  \qquad
  [f,g]_R
  \triangleq \nabla f^\intercal \mathbf{R}\nabla g.
  \label{eq:brackets}
\end{equation}
When $\mathbf{J}$ satisfies the Jacobi identity,
$\{\cdot,\cdot\}_J$ is a Poisson bracket. In the general pH
setting considered here, only skew-symmetry is used, so we refer
to it as an interconnection bracket.
Since $\mathbf{R}\succeq 0$, the metric bracket satisfies
\begin{equation}
  [f,f]_R
  = \nabla f^\intercal\mathbf{R}\nabla f \ge 0
  \label{eq:metric_bracket_psd}
\end{equation}
pointwise. The Koopman derivations associated with the unforced
pH vector field are
\begin{equation}
\begin{aligned}
  \mathcal{K}_J f &= \{f,\mathcal{H}\}_J,\\
  \mathcal{K}_R f &= [f,\mathcal{H}]_R,\\
  \mathcal{K}_0 f &= \mathcal{K}_J f-\mathcal{K}_R f,\\
  \mathcal{K}_{\rm port}(\mathbf{u})f
  &= \nabla f^\intercal\mathbf{v}_u(\mathbf{x},\mathbf{u})
   = \nabla f^\intercal\mathbf{G}\mathbf{u},\\
  \mathcal{K}_{\mathbf{u}}f
  &= \mathcal{K}_0f+\mathcal{K}_{\rm port}(\mathbf{u})f.
\end{aligned}
  \label{eq:bracket_derivations}
\end{equation}
The nonnegativity of $[f,f]_R$ is a pointwise metric-bracket
property and should not be confused with positivity of the
first-order derivation $\mathcal{K}_R$ on arbitrary observables.
For the Hamiltonian observable,
\begin{equation}
  \mathcal{K}_0\mathcal{H}
  = -[\mathcal{H},\mathcal{H}]_R,\qquad
  \mathcal{K}_{\mathbf{u}}\mathcal{H}
  = -[\mathcal{H},\mathcal{H}]_R
    + \mathbf{y}^\intercal\mathbf{u}
  \le \mathbf{y}^\intercal\mathbf{u}.
  \label{eq:hamiltonian_bracket_balance}
\end{equation}

\subsection{Koopman Operator and Generator}
\label{sec:koopman}

Let $\varphi_t$ denote the flow of a smooth autonomous vector
field $\dot{\mathbf{x}}=\mathbf{v}(\mathbf{x})$. The
\emph{Koopman operator} is defined on smooth observables by
composition with the flow:
\begin{equation}
  (\mathcal{U}^t f)(\mathbf{x}) = f(\varphi_t(\mathbf{x})).
  \label{eq:koopman_op}
\end{equation}
Its infinitesimal generator is the directional derivative along
the vector field,
\begin{equation}
  \mathcal{K}f(\mathbf{x})
  = \lim_{t\to 0^+}
    \frac{(\mathcal{U}^t f)(\mathbf{x}) - f(\mathbf{x})}{t}
  = \nabla f(\mathbf{x})^\intercal \mathbf{v}(\mathbf{x}),
  \label{eq:koopman_gen}
\end{equation}
for $f\in C^1(\mathcal{X})$. When a reference measure $\eta$ is
fixed and the required invariance, boundedness, or closability
assumptions hold, one may view $\mathcal{U}^t$ and
$\mathcal{K}$ as operators on $L^2(\eta)$ with domain
$\mathcal{D}(\mathcal{K})\subset L^2(\eta)$
\cite{eisner2015operator,lasota1994chaos}. We do not assume that
a nondegenerate invariant measure exists for a general
dissipative pH flow.

For control-affine systems
\begin{equation}
  \dot{\mathbf{x}}
  = \mathbf{f}_0(\mathbf{x})
    + \sum_{j=1}^m \mathbf{g}_j(\mathbf{x})\,u_j,
  \label{eq:control_affine}
\end{equation}
with autonomous drift $\mathbf{f}_0:\mathbb{R}^n\to\mathbb{R}^n$
and control vector fields $\mathbf{g}_j$, the generator is
affine in $\mathbf{u}$:
$\mathcal{K}_{\mathbf{u}}
= \mathcal{K}_0 + \sum_{j=1}^m u_j\mathcal{K}_j$,
where $\mathcal{K}_j f = \nabla f^\intercal\mathbf{g}_j$ and
$\mathcal{K}_0$ is associated with $\mathbf{f}_0$. For the pH
system~\eqref{eq:ph_nonlinear}, $\mathbf{f}_0=\mathbf{v}_0$ and
the input vector fields are the columns of $\mathbf{G}$.
For each fixed input value $\mathbf{u}$,
$\mathcal{K}_{\mathbf{u}}$ is the input-parametrized directional
derivative
\begin{equation}
  \mathcal{K}_{\mathbf{u}} f(\mathbf{x})
  =
  \nabla f(\mathbf{x})^\intercal
  \left(
  \mathbf{f}_0(\mathbf{x})
  +\sum_{j=1}^m \mathbf{g}_j(\mathbf{x})u_j
  \right).
  \label{eq:fixed_input_generator}
\end{equation}
If $\mathbf{u}=\mathbf{u}(t)$ is time-dependent,
$\mathcal{K}_{\mathbf{u}(t)}$ should be interpreted as a
time-dependent family of directional derivatives, not as the
generator of a single autonomous Koopman semigroup on
$\mathcal{X}$, unless the system is augmented or a feedback law is
fixed.

\begin{remark}[Formal identities versus operator closure]
\label{rem:formal_vs_closure}
The results below use formal adjointness and weak bracket
identities on the test space $\mathcal{V}$. Establishing
skew-adjointness of closures or semigroup generation properties
is a separate operator-theoretic question and is not needed for
the finite-dimensional weak identities developed here.
\end{remark}

\subsection{Reference, Galerkin, and Empirical Measures}

We distinguish three measures that play different roles. For a
measure $d\mu=\rho\,d\mathbf{x}$, define the weighted divergence
by
\begin{equation}
  \operatorname{div}_{\mu}\mathbf{v}
  \triangleq \rho^{-1}\operatorname{div}(\rho\mathbf{v}).
  \label{eq:weighted_divergence}
\end{equation}
First, formal skewness of the conservative derivation is naturally
stated with respect to a conservative measure $\mu_J$ satisfying
\begin{equation}
  \operatorname{div}_{\mu_J}
  \bigl(\mathbf{J}\nabla\mathcal{H}\bigr)=0,
  \label{eq:muJ_divergence}
\end{equation}
together with the boundary conditions in
Assumption~\ref{ass:functional_setting}.
This is a conservative-flow condition for
$\mathbf{v}_J=\mathbf{J}\nabla\mathcal{H}$, not an invariant
measure condition for the full dissipative pH dynamics. For
constant canonical $\mathbf{J}$, constant-density Lebesgue measure
on a boundary-compatible domain is a standard case:
$\operatorname{div}(\mathbf{J}\nabla\mathcal{H})$ vanishes because
$\operatorname{tr}(\mathbf{J}\nabla^2\mathcal{H})=0$. For
state-dependent $\mathbf{J}(\mathbf{x})$, skew-symmetry alone does
not generally imply
$\operatorname{div}_{\mu}(\mathbf{J}\nabla\mathcal{H})=0$, since
terms involving derivatives of $\mathbf{J}$ may remain.

Second, weak Galerkin matrices are population integrals with
respect to a Galerkin or sampling measure $\nu$. This measure
need not be invariant under the full flow; it may be a quadrature
measure, a user-chosen sampling distribution, or a measure
supported on a compact region of interest.

Third, empirical trajectory data define a discrete empirical
measure, for example
\begin{equation}
  \widehat{\nu}_{N_{\rm samp}}
  \triangleq
  \frac{1}{N_{\rm samp}}
  \sum_{k=1}^{N_{\rm samp}} \delta_{\mathbf{x}_k}.
  \label{eq:empirical_measure}
\end{equation}
Empirical averages approximate population inner products with
finite-sample error and should not be read as exact population
identities. For dissipative systems, invariant probability
measures of the full flow may be singular or degenerate, for
example a Dirac measure at a stable equilibrium. Such measures
are not the default object used for Galerkin approximation or
data fitting.

\begin{table}[!t]
\caption{Notation used in the sequel.}
\label{tab:notation}
\centering
\scriptsize
\begin{tabular}{p{0.27\columnwidth}p{0.55\columnwidth}}
\hline
Symbol & Meaning \\
\hline
$\mathbf{x}$, $\mathbf{u}$, $\mathbf{y}$
  & state, input, and output \\
$\mathcal{H}$
  & Hamiltonian or stored energy \\
$\mathbf{J}$, $\mathbf{R}$, $\mathbf{G}$
  & interconnection, dissipation, and port maps \\
$\mathbf{v}_J$, $\mathbf{v}_R$, $\mathbf{v}_0$
  & conservative, dissipative, and unforced vector fields \\
$\mathcal{K}$, $\mathcal{K}_{\mathbf{u}}$, $\mathcal{K}_0$
  & generator, fixed-input generator, and unforced generator \\
$\mathcal{K}_J$, $\mathcal{K}_R$, $\mathcal{K}_{\rm port}(\mathbf{u})$
  & interconnection, dissipative, and port derivations \\
$\{f,g\}_J$, $[f,g]_R$
  & interconnection and metric brackets \\
$\mathcal{V}$
  & test space for weak identities \\
$\mu_J$, $\nu$, $\widehat{\nu}_{N_{\rm samp}}$
  & conservative, Galerkin/sampling, and empirical measures \\
$\Psi$, $z$
  & dictionary and lifted coordinate, used later \\
$M$, $B_J$, $A_J$
  & mass, conservative weak, and coefficient matrices \\
$C_R$, $D_R$
  & dissipative derivation and Dirichlet matrices \\
$r_N$
  & projection or closure residual \\
$S$, $D$, $P$, $B$
  & lifted pH surrogate matrices \\
$\widetilde{\mathcal{H}}$, $\widetilde{\mathbf{y}}$, $K_d$
  & lifted storage, output, and damping gain \\
$A_{\rm cl}$, $C_{\rm diss}$
  & closed-loop and zero-dissipation matrices \\
\hline
\end{tabular}
\end{table}


\section{Koopman Generator Decomposition}
\label{sec:decomposition}

We now record the Koopman-level identities induced by the pH
vector-field decomposition. The first part of the result is a
pointwise identity on smooth observables. The formal skewness of
the conservative term is instead a weak $L^2$ identity and
therefore requires a reference measure preserved by the
conservative vector field, together with vanishing boundary
terms. The dissipative pH structure appears through the metric
bracket and the Hamiltonian energy identity; the first-order
derivation $\mathcal{K}_R f=[f,\mathcal{H}]_R$ is not a positive
operator on arbitrary observables.

\begin{theorem}[Koopman bracket decomposition for pH systems]
\label{thm:main}
Consider the pH system~\eqref{eq:ph_nonlinear} under the
regularity assumptions of Section~\ref{sec:prelim}. For a smooth
observable $f$, define
\begin{equation}
  \mathcal{K}_{\mathbf{u}}f
  \triangleq
  \nabla f^\intercal
  \left[
  (\mathbf{J}-\mathbf{R})\nabla\mathcal{H}
  +\mathbf{G}\mathbf{u}
  \right].
  \label{eq:controlled_generator_pointwise}
\end{equation}
Then
\begin{equation}
  \mathcal{K}_{\mathbf{u}}f
  =
  \{f,\mathcal{H}\}_J
  -
  [f,\mathcal{H}]_R
  +
  \nabla f^\intercal\mathbf{G}\mathbf{u}.
  \label{eq:gen_split}
\end{equation}
Equivalently,
\begin{equation}
  \mathcal{K}_{\mathbf{u}}
  =
  \mathcal{K}_J-\mathcal{K}_R
  +\mathcal{K}_{\rm port}(\mathbf{u}),
  \label{eq:gen_operator_split}
\end{equation}
where
\begin{equation}
\begin{aligned}
  \mathcal{K}_J f &\triangleq \{f,\mathcal{H}\}_J,\\
  \mathcal{K}_R f &\triangleq [f,\mathcal{H}]_R,\\
  \mathcal{K}_{\rm port}(\mathbf{u})f
  &\triangleq \nabla f^\intercal\mathbf{G}\mathbf{u}.
\end{aligned}
  \label{eq:gen_components}
\end{equation}
Moreover,
\begin{equation}
  \{\mathcal{H},\mathcal{H}\}_J=0,\qquad
  [f,f]_R\ge 0
  \quad\text{for all smooth } f,
  \label{eq:bracket_properties}
\end{equation}
and the Hamiltonian observable satisfies
\begin{equation}
  \mathcal{K}_{\mathbf{u}}\mathcal{H}
  =
  -[\mathcal{H},\mathcal{H}]_R
  +
  \mathbf{y}^\intercal\mathbf{u}
  \le
  \mathbf{y}^\intercal\mathbf{u}.
  \label{eq:koopman_ph_energy}
\end{equation}
Finally, let $\mu_J$ be a reference measure such that
\begin{equation}
  \operatorname{div}_{\mu_J}
  (\mathbf{J}\nabla\mathcal{H})=0,
  \label{eq:thm_muJ_condition}
\end{equation}
and assume the boundary terms vanish for the test space
$\mathcal{V}$ specified in Section~\ref{sec:prelim}. Then, for
all real-valued $f,g\in\mathcal{V}$,
\begin{equation}
  \langle g,\mathcal{K}_J f\rangle_{L^2(\mu_J)}
  =
  -
  \langle \mathcal{K}_J g,f\rangle_{L^2(\mu_J)}.
  \label{eq:skew_result}
\end{equation}
Thus $\mathcal{K}_J$ is formally skew-adjoint on the chosen test
space with respect to $\mu_J$.
\end{theorem}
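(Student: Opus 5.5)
The plan is to handle the pointwise identities first, because they are pure linear algebra, and then to treat the weak skewness statement, which is the only part that involves integration.

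For \eqref{eq:gen_split}, expand the definition \eqref{eq:controlled_generator_pointwise} by linearity:
\[
\nabla f^\intercal(\mathbf{J}-\mathbf{R})\nabla\mathcal{H}
=\nabla f^\intercal\mathbf{J}\nabla\mathcal{H}-\nabla f^\intercal\mathbf{R}\nabla\mathcal{H}.
\]
By the bracket definitions \eqref{eq:brackets}, the two terms on the right are $\{f,\mathcal{H}\}_J$ and $[f,\mathcal{H}]_R$. The port term is carried along unchanged. Reading this identity for every $f$ gives the operator form \eqref{eq:gen_operator_split} with the components \eqref{eq:gen_components}.

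For \eqref{eq:bracket_properties}, write $\xi=\nabla\mathcal{H}(\mathbf{x})$. Then $\xi^\intercal\mathbf{J}\xi$ is a scalar equal to its own transpose $\xi^\intercal\mathbf{J}^\intercal\xi=-\xi^\intercal\mathbf{J}\xi$, so it vanishes. Nonnegativity of $[f,f]_R$ follows pointwise from $\mathbf{R}(\mathbf{x})\succeq0$, exactly as in \eqref{eq:metric_bracket_psd}.

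For \eqref{eq:koopman_ph_energy}, substitute $f=\mathcal{H}$ into \eqref{eq:gen_split}. The first bracket drops out by $\{\mathcal{H},\mathcal{H}\}_J=0$. The port term becomes $\nabla\mathcal{H}^\intercal\mathbf{G}\mathbf{u}=(\mathbf{G}^\intercal\nabla\mathcal{H})^\intercal\mathbf{u}=\mathbf{y}^\intercal\mathbf{u}$ by the output equation in \eqref{eq:ph_nonlinear}. The inequality then follows from $[\mathcal{H},\mathcal{H}]_R\ge0$.

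The skewness claim \eqref{eq:skew_result} is the only step needing analysis. Write $d\mu_J=\rho\,d\mathbf{x}$ and $\mathbf{v}_J=\mathbf{J}\nabla\mathcal{H}$, so that $\mathcal{K}_Jf=\nabla f^\intercal\mathbf{v}_J$. The plan is to apply the divergence theorem to the vector field $fg\rho\,\mathbf{v}_J$. Using the product rule,
\[
\operatorname{div}(fg\rho\mathbf{v}_J)=\rho\,\bigl(g\,\nabla f^\intercal\mathbf{v}_J+f\,\nabla g^\intercal\mathbf{v}_J\bigr)+fg\,\operatorname{div}(\rho\mathbf{v}_J).
\]
The last term is $fg\,\rho\,\operatorname{div}_{\mu_J}\mathbf{v}_J$, which vanishes by \eqref{eq:thm_muJ_condition} and the definition \eqref{eq:weighted_divergence}. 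Integrate over $\mathcal{X}$. For $f,g\in C_c^\infty(\mathcal{X})$ the left side integrates to zero, and for the alternative test space it integrates to zero by the assumed vanishing of boundary terms. What remains is
\[
\langle g,\mathcal{K}_Jf\rangle_{L^2(\mu_J)}+\langle\mathcal{K}_Jg,f\rangle_{L^2(\mu_J)}=0,
\]
which is \eqref{eq:skew_result}.

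The main obstacle is regularity. The product rule requires $\rho\mathbf{v}_J\in C^1$, or at least that its distributional divergence is a function. Since $\mathcal{H}\in C^2$ and $\mathbf{J}$ is assumed smooth, $\mathbf{v}_J\in C^1$. I would therefore also assume $\rho\in C^1$, or interpret \eqref{eq:thm_muJ_condition} in the weak sense
\[
\int \rho\,\nabla\phi^\intercal\mathbf{v}_J\,d\mathbf{x}=0 \quad\text{for all test }\phi,
\]
and apply it with $\phi=fg$. The weak version gives the same conclusion directly and avoids differentiating $\rho$. Integrability is automatic, since $f$ and $g$ have compact support, or follows from the boundary assumptions in Assumption~\ref{ass:functional_setting}.
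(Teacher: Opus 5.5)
Your proposal is correct and follows the paper's proof essentially step for step: linearity for the split, skew-symmetry and positive semidefiniteness for the bracket properties, and the output equation $\mathbf{y}=\mathbf{G}^\intercal\nabla\mathcal{H}$ for the energy balance. For formal skewness you also use the same product rule integrated against $\mu_J$; applying the divergence theorem to $fg\rho\,\mathbf{v}_J$, or the weak divergence condition with $\phi=fg$, simply spells out the integration step the paper compresses into one line, and your regularity caveat on $\rho$ is a reasonable addition the paper leaves implicit.
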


\begin{proof}
\textit{Step 1: Pointwise decomposition.}
Using the definition of the Koopman generator along the controlled
pH vector field,
\begin{align}
  \mathcal{K}_{\mathbf{u}}f
  &=
  \nabla f^\intercal
  \left[
  (\mathbf{J}-\mathbf{R})\nabla\mathcal{H}
  +\mathbf{G}\mathbf{u}
  \right] \notag\\
  &=
  \{f,\mathcal{H}\}_J
  -
  [f,\mathcal{H}]_R
  +
  \nabla f^\intercal\mathbf{G}\mathbf{u}.
  \label{eq:pointwise_decomposition_proof}
\end{align}
This proves~\eqref{eq:gen_split}.

\textit{Step 2: Bracket identities.}
Since $\mathbf{J}^\intercal=-\mathbf{J}$,
\begin{equation}
  \{\mathcal{H},\mathcal{H}\}_J
  =
  (\nabla\mathcal{H})^\intercal
  \mathbf{J}\nabla\mathcal{H}
  =0.
  \label{eq:hamiltonian_interconnection_zero}
\end{equation}
Since $\mathbf{R}=\mathbf{R}^\intercal\succeq 0$,
\begin{equation}
  [f,f]_R
  =
  (\nabla f)^\intercal\mathbf{R}\nabla f
  \ge 0
  \label{eq:metric_positive_proof}
\end{equation}
pointwise for every smooth observable $f$.

\textit{Step 3: Hamiltonian energy identity.}
Substituting $f=\mathcal{H}$ into
\eqref{eq:pointwise_decomposition_proof} and using
\eqref{eq:hamiltonian_interconnection_zero} gives
\begin{align}
  \mathcal{K}_{\mathbf{u}}\mathcal{H}
  &=
  -[\mathcal{H},\mathcal{H}]_R
  +
  (\nabla\mathcal{H})^\intercal\mathbf{G}\mathbf{u}
  \notag\\
  &=
  -[\mathcal{H},\mathcal{H}]_R
  +
  \mathbf{y}^\intercal\mathbf{u}
  \le
  \mathbf{y}^\intercal\mathbf{u},
  \label{eq:hamiltonian_energy_proof}
\end{align}
where the last equality uses
$\mathbf{y}=\mathbf{G}^\intercal\nabla\mathcal{H}$ and the
inequality uses $[\mathcal{H},\mathcal{H}]_R\ge 0$.

\textit{Step 4: Formal skewness of the conservative derivation.}
Let $\mathbf{v}_J=\mathbf{J}\nabla\mathcal{H}$. For
$f,g\in\mathcal{V}$, the product rule gives
\begin{align}
  \nabla(fg)^\intercal \mathbf{v}_J
  &= g \nabla f^\intercal \mathbf{v}_J
   + f \nabla g^\intercal \mathbf{v}_J
   = g\,\mathcal{K}_J f + f\,\mathcal{K}_J g.
   \label{eq:product_rule}
\end{align}
Integrating against $\mu_J$, using
$\operatorname{div}_{\mu_J}\mathbf{v}_J=0$, and using the
vanishing boundary terms specified by $\mathcal{V}$ yields
\begin{equation}
  0 =
  \int_{\mathcal{X}}
  \nabla(fg)^\intercal\mathbf{v}_J\,d\mu_J
  =
  \int_{\mathcal{X}}
  \left(g\,\mathcal{K}_J f+f\,\mathcal{K}_J g\right)d\mu_J.
  \label{eq:div_zero}
\end{equation}
Therefore, for real-valued observables,
\begin{equation}
  \langle g,\mathcal{K}_J f\rangle_{L^2(\mu_J)}
  =
  -
  \langle \mathcal{K}_J g,f\rangle_{L^2(\mu_J)}.
\end{equation}
\end{proof}


\begin{remark}[Metric-bracket positivity versus positivity of a derivation]
\label{rem:metric_vs_derivation}
Theorem~\ref{thm:main} asserts positivity of the metric bracket
$[f,f]_R$, not positivity of the first-order derivation
$\mathcal{K}_R$. For a reference measure $\eta$,
\begin{equation}
  \langle f,\,\mathcal{K}_R f\rangle_\eta
  =
  \int_{\mathcal{X}} f [f,\mathcal{H}]_R\,d\eta,
  \label{eq:KR_no_fixed_sign}
\end{equation}
which generally has no fixed sign for arbitrary $f$. The correct
dissipative energy statement is the Hamiltonian identity
\begin{equation}
  \mathcal{K}_{\mathbf{u}}\mathcal{H}
  =
  -[\mathcal{H},\mathcal{H}]_R
  +
  \mathbf{y}^\intercal\mathbf{u}.
  \label{eq:remark_energy_identity}
\end{equation}
The finite-dimensional dissipative object used later should be
the Dirichlet, or metric-bracket, matrix
\begin{equation}
  (D_R)_{ij}
  =
  \int_{\mathcal{X}}
  \nabla\psi_i^\intercal\mathbf{R}\nabla\psi_j\,d\nu,
  \label{eq:dirichlet_matrix_preview}
\end{equation}
not the Galerkin matrix of the first-order derivation
$\mathcal{K}_R$.
\end{remark}

\smallskip
\noindent\emph{A one-dimensional counterexample.}
The following example clarifies the distinction. Consider the
scalar stable pH system
\begin{equation}
\begin{gathered}
  \dot{x}=-x,\qquad
  \mathcal{H}(x)=\frac{1}{2}x^2,\\
  J=0,\qquad R=1,\qquad G=0.
\end{gathered}
  \label{eq:one_dim_counter_system}
\end{equation}
Then $\nabla\mathcal{H}(x)=x$,
$\mathcal{K}_R f=xf'(x)$, and
$\mathcal{K}_0 f=-xf'(x)$. Let
\begin{equation}
  f(x)=
  \begin{cases}
  \exp\!\left(-\dfrac{x^2}{1-x^2}\right), & |x|<1,\\[1ex]
  0, & |x|\ge 1.
  \end{cases}
  \label{eq:bell_observable_counter}
\end{equation}
This is a standard smooth bump function with maximum at the
origin. For $|x|<1$,
\begin{equation}
\begin{aligned}
  f'(x)
  &=
  -\frac{2x}{(1-x^2)^2}f(x),
  \\
  \mathcal{K}_R f(x)
  &=
  -\frac{2x^2}{(1-x^2)^2}f(x)
  \le 0.
\end{aligned}
  \label{eq:bell_KR_negative}
\end{equation}
Consequently,
\begin{equation}
  f(x)\mathcal{K}_R f(x)
  =
  -\frac{2x^2}{(1-x^2)^2}f(x)^2
  \le 0,
  \label{eq:bell_pointwise_product}
\end{equation}
and, with respect to Lebesgue measure,
\begin{equation}
  \langle f,\mathcal{K}_R f\rangle
  =
  -\int_{-1}^{1}
  \frac{2x^2}{(1-x^2)^2}f(x)^2\,dx
  <0.
  \label{eq:bell_inner_product_negative}
\end{equation}
The integral is finite because $f$ is smooth and compactly
supported on $[-1,1]$.
The same strict negativity holds for any measure assigning
positive mass to a subset of $(-1,1)\setminus\{0\}$ on which the
integral is evaluated. Thus $\mathcal{K}_R$ is not positive on
arbitrary observables. For the same observable, however, the
metric bracket is nonnegative:
\begin{equation}
  [f,f]_R=(f'(x))^2
  =
  \frac{4x^2}{(1-x^2)^4}f(x)^2
  \ge 0
  \qquad (|x|<1).
  \label{eq:bell_metric_positive}
\end{equation}
For the Hamiltonian observable,
\begin{equation}
  [\mathcal{H},\mathcal{H}]_R=x^2\ge 0,
  \qquad
  \mathcal{K}_0\mathcal{H}=-x^2\le 0.
  \label{eq:counter_hamiltonian_dissipation}
\end{equation}
Thus the pH system dissipates the Hamiltonian, while an arbitrary
bell-shaped observable centered at the stable equilibrium can
increase along trajectories. Indeed, for
$x(t)=e^{-t}x_0$ with $0<|x_0|<1$, the state moves toward the
maximum of the bell-shaped observable, so $f(x(t))$ can increase
even though $\mathcal{H}(x(t))$ decreases. This shows why the
metric-bracket statement above is the correct dissipative
positivity statement.


\begin{remark}[Conservative-measure condition]
\label{rem:conservative_measure_condition}
The condition on $\mu_J$ is a conservative-flow condition. It is
not an invariant-measure assumption for the full dissipative pH
dynamics. For canonical constant $\mathbf{J}$, Lebesgue measure
is often appropriate because
\begin{equation}
  \operatorname{div}(\mathbf{J}\nabla\mathcal{H})
  =
  \operatorname{tr}(\mathbf{J}\nabla^2\mathcal{H})
  =
  0.
  \label{eq:canonical_divergence_zero}
\end{equation}
For state-dependent $\mathbf{J}(\mathbf{x})$, skew-symmetry alone
is not enough; derivative terms involving $\mathbf{J}$ may
appear. Galerkin/sampling measures used later need not be
invariant measures.
\end{remark}

\begin{remark}[Uniqueness and converse questions]
\label{rem:uniqueness_converse}
\emph{Nonuniqueness.}
The representation
\begin{equation}
  \mathbf{v}
  =
  (\mathbf{J}-\mathbf{R})\nabla\mathcal{H}
  \label{eq:ph_vector_field_nonunique}
\end{equation}
is generally not unique. The storage function itself may be a
modeling choice, and even for fixed $\mathcal{H}$ the fields
$\mathbf{J}$ and $\mathbf{R}$ are constrained only through their
action on $\nabla\mathcal{H}$. Components that annihilate
$\nabla\mathcal{H}$ do not change the vector field. Additional
physical interconnection structure, sparsity, coordinate choices,
or parameterizations are therefore needed for uniqueness.

\emph{Pointwise local converse.}
Away from critical points of $\mathcal{H}$, let
$g=\nabla\mathcal{H}(\mathbf{x})\neq 0$ and suppose a smooth drift
$\mathbf{v}$ satisfies $g^\intercal\mathbf{v}\le 0$ at the point.
Writing
\[
  \mathbf{v}_\parallel
  =
  \frac{g^\intercal\mathbf{v}}{\|g\|^2}g,
  \qquad
  \mathbf{v}_\perp
  =
  \mathbf{v}-\mathbf{v}_\parallel,
\]
one possible local construction is
\begin{equation}
  \mathbf{R}_0
  =
  -\frac{g^\intercal\mathbf{v}}{\|g\|^4}gg^\intercal
  \succeq0,
  \qquad
  \mathbf{J}_0
  =
  \frac{\mathbf{v}_\perp g^\intercal
  -g\mathbf{v}_\perp^\intercal}{\|g\|^2},
  \label{eq:local_ph_converse}
\end{equation}
which gives $\mathbf{J}_0g-\mathbf{R}_0g=\mathbf{v}$.
This is only a pointwise observation. Smoothness across critical
points of $\mathcal{H}$ is a separate requirement, and the
displayed formula is undefined at $\nabla\mathcal{H}=0$.
Compatibility with prescribed $\mathbf{J}$, $\mathbf{R}$, and
$\mathbf{G}$, and global boundary or topology constraints are also
separate requirements.

\emph{Operator-triple converse.}
Conversely, an abstract triple of operators with skew-looking and
positive-looking algebra need not come from a pH vector field. The
components must be first-order derivations induced by vector
fields. The interconnection derivation must arise from a skew
matrix field $\mathbf{J}$, the metric bracket must arise from a
symmetric positive semidefinite matrix field $\mathbf{R}$, and the
port term must arise from $\mathbf{G}\mathbf{u}$. Abstract adjointness or
matrix skewness alone is insufficient to reconstruct a nonlinear
pH realization.

\emph{Linear case.}
In the linear finite-dimensional case this ambiguity is reduced
only after fixing the storage metric. If
\[
  \dot{x}=Ax+Bu,
  \qquad
  \mathcal{H}(x)=\frac12x^\intercal Qx,
  \qquad
  Q=Q^\intercal\succ0,
\]
and $A=(J-R)Q$ is sought with $J=-J^\intercal$ and
$R=R^\intercal\succeq0$, then for this fixed $Q$ one must have
\begin{align}
  J
  &=
  \frac12\left(AQ^{-1}-Q^{-1}A^\intercal\right),
  \notag\\
  R
  &=
  -\frac12\left(AQ^{-1}+Q^{-1}A^\intercal\right),
  \label{eq:linear_fixed_Q_decomposition}
\end{align}
provided the resulting $R$ is positive semidefinite. For this
fixed $Q$, the condition $R\succeq0$ is equivalently
\begin{equation}
  AQ^{-1}+Q^{-1}A^\intercal\preceq0,
  \qquad
  QA+A^\intercal Q\preceq0.
  \label{eq:linear_Q_dissipativity}
\end{equation}
Thus the fixed storage metric $Q$ determines the corresponding
linear pH dissipativity condition. For
a lifted linear surrogate $A_z=(S-D)P$ with fixed
$P=P^\intercal\succ0$,
\begin{align}
  S
  &=
  \frac12\left(A_zP^{-1}-P^{-1}A_z^\intercal\right),
  \notag\\
  D
  &=
  -\frac12\left(A_zP^{-1}+P^{-1}A_z^\intercal\right),
  \label{eq:lifted_fixed_P_decomposition}
\end{align}
provided $D\succeq0$. Without fixing the storage metric, such
representations are generally nonunique.

\emph{Consequences for approximation.}
Finite-dimensional skew or positive-looking matrices do not by
themselves identify a unique nonlinear pH realization. The
dictionary, storage metric, residual, and port representation are
part of the model specification.
\end{remark}


\section{Finite-Dimensional Structure-Preserving Approximations}
\label{sec:approximation}

Section~\ref{sec:decomposition} established pointwise generator
identities and weak conservative skewness under a
conservative-measure condition. This section turns those
identities into finite-dimensional objects: a dictionary and mass
matrix, weak matrices for $\mathcal{K}_J$, $\mathcal{K}_R$, and
the metric bracket, a projection residual $r_N$, and finally a
lifted pH surrogate $(S,D,P,B)$. Proposition~\ref{prop:splitting}
is the key structural statement: under the appropriate measure and
boundary assumptions, the conservative weak matrix is skew while
the dissipative metric-bracket matrix is positive semidefinite.

There are several levels of claim. Pointwise identities give
$\mathcal{K}_{\mathbf{u}}f=\{f,\mathcal{H}\}_J-[f,\mathcal{H}]_R
+\nabla f^\intercal\mathbf{G}\mathbf{u}$ and, for the Hamiltonian,
$\mathcal{K}_{\mathbf{u}}\mathcal{H}
=-[\mathcal{H},\mathcal{H}]_R+\mathbf{y}^\intercal\mathbf{u}$.
Weak Galerkin identities give $B_J=-B_J^\intercal$ and
$D_R\succeq0$ only under the stated measure and boundary
assumptions. Empirical estimates approximate those population
objects and may require imposed constraints. The lifted pH
surrogate is passive in lifted coordinates, while plant-level
conclusions require residual, injectivity, and output-consistency
assumptions.

\subsection{Dictionary, Mass Matrix, and Weak Projection}

Let
\begin{equation}
  \Psi(\mathbf{x})
  =
  \begin{bmatrix}
    \psi_1(\mathbf{x}) & \cdots & \psi_N(\mathbf{x})
  \end{bmatrix}^{\intercal},
  \qquad
  z=\Psi(\mathbf{x}),
  \label{eq:dictionary_lift}
\end{equation}
where $\psi_i\in\mathcal{V}$, or more generally the
dictionary functions are smooth and satisfy the boundary
assumptions of Section~\ref{sec:prelim}. Define
$\mathcal{V}_N=\operatorname{span}\{\psi_1,\ldots,\psi_N\}$.
Let $\nu$ be the Galerkin or sampling measure introduced in
Section~\ref{sec:prelim}, and write the population inner
product as
\begin{equation}
  \langle f,g\rangle_\nu
  :=
  \int_{\mathcal{X}} f(\mathbf{x})g(\mathbf{x})\,
  d\nu(\mathbf{x}).
  \label{eq:nu_inner_product}
\end{equation}
The mass matrix is
\begin{equation}
  M_{ij}=\langle\psi_i,\psi_j\rangle_\nu.
  \label{eq:mass_matrix}
\end{equation}
We assume $M\succ0$, equivalently the dictionary functions
are linearly independent in $L^2(\nu)$. For a derivation
$\mathcal{L}$, the weak Galerkin matrix is
\begin{equation}
  (B_{\mathcal{L}})_{ij}
  =
  \langle\psi_i,\mathcal{L}\psi_j\rangle_\nu,
  \qquad
  A_{\mathcal{L}}=M^{-1}B_{\mathcal{L}}.
  \label{eq:weak_galerkin_matrix}
\end{equation}
The matrix $A_{\mathcal{L}}$ is the coefficient-space
operator. Only when the dictionary is orthonormal in
$L^2(\nu)$ does $M=I$ and $B_{\mathcal{L}}=A_{\mathcal{L}}$.

\begin{remark}[Coefficient and coordinate conventions]
\label{rem:coefficient_coordinate}
With the convention $z=\Psi(\mathbf{x})$ as a column vector,
$A_{\mathcal{L}}$ acts on coefficient vectors of observables
$f_c=\Psi^\intercal c$. If
\[
  \mathcal{L}\psi_j\approx
  \sum_i (A_{\mathcal{L}})_{ij}\psi_i,
\]
then the lifted coordinate vector satisfies
\[
  \dot z=\mathcal{L}\Psi(\mathbf{x})
  \approx A_{\mathcal{L}}^\intercal z.
\]
Thus $A_{\mathcal{L}}$ is a coefficient-space operator, whereas
$A_{\mathcal{L}}^\intercal$ is the associated coordinate-dynamics
matrix. For the conservative part, if
$A_J^\intercal M+MA_J=0$, then $A_z=A_J^\intercal$ satisfies
\begin{equation}
  A_z^\intercal M^{-1}+M^{-1}A_z=0.
  \label{eq:coordinate_mass_skew}
\end{equation}
Equivalently, the conservative coordinate-space realization can
be written with $P=M^{-1}$ and $S=-B_J$, since
$A_z=A_J^\intercal=-B_JM^{-1}=SP$.
\end{remark}

\subsection{Conservative Weak Form and Dissipative Dirichlet Matrix}

The skewness identity below requires the Galerkin or sampling
measure $\nu$ to be conservative for
$\mathbf{v}_J=\mathbf{J}\nabla\mathcal{H}$ in the weak sense
needed for products of dictionary functions. A generic empirical
or quadrature measure need not satisfy this identity exactly; in
that case skewness is approximate or must be imposed
algebraically. In particular, we require
\begin{equation}
\begin{gathered}
  \int_{\mathcal{X}}
  \nabla(\psi_i\psi_j)^\intercal\mathbf{v}_J\,d\nu=0,\\
  \text{for all required dictionary pairs } i,j.
\end{gathered}
  \label{eq:nu_weak_conservative}
\end{equation}

For the conservative derivation, define
\begin{align}
  (B_J)_{ij}
  &=
  \langle\psi_i,\mathcal{K}_J\psi_j\rangle_\nu
  \nonumber\\
  &=
  \int_{\mathcal{X}}
  \psi_i\{\psi_j,\mathcal{H}\}_J\,d\nu,
  \qquad
  A_J=M^{-1}B_J .
  \label{eq:BJ_AJ}
\end{align}
The matrix of the first-order dissipative derivation is
\begin{equation}
  (C_R)_{ij}
  =
  \langle\psi_i,\mathcal{K}_R\psi_j\rangle_\nu
  =
  \int_{\mathcal{X}}
  \psi_i[\psi_j,\mathcal{H}]_R\,d\nu .
  \label{eq:CR_derivation_matrix}
\end{equation}
The matrix $C_R$ is generally neither symmetric nor positive
semidefinite. The correct positive finite-dimensional
dissipative object is instead the Dirichlet, or
metric-bracket, matrix
\begin{equation}
  (D_R)_{ij}
  =
  \int_{\mathcal{X}}[\psi_i,\psi_j]_R\,d\nu
  =
  \int_{\mathcal{X}}
  \nabla\psi_i^\intercal\mathbf{R}\nabla\psi_j\,d\nu .
  \label{eq:DR_dirichlet_matrix}
\end{equation}
Thus $D_R$ represents the symmetric metric bracket, not the
Galerkin matrix of the first-order derivation $\mathcal{K}_R$.

\begin{proposition}[Weak Galerkin structure of the pH brackets]
\label{prop:splitting}
Let $\Psi=(\psi_1,\ldots,\psi_N)^\intercal$ with
$\psi_i\in\mathcal{V}$, and let $M$, $B_J$, $A_J$, $C_R$,
and $D_R$ be defined as above. Assume $M\succ0$, that the
conservative vector field
$\mathbf{v}_J=\mathbf{J}\nabla\mathcal{H}$ satisfies the weak
conservative condition~\eqref{eq:nu_weak_conservative}, and that
boundary terms vanish for products of dictionary functions. Then
\begin{equation}
  B_J=-B_J^\intercal .
  \label{eq:BJ_skew}
\end{equation}
Consequently,
\begin{equation}
  A_J^\intercal M+M A_J=0 .
  \label{eq:AJ_mass_skew}
\end{equation}
Moreover,
\begin{equation}
  D_R=D_R^\intercal\succeq0 .
  \label{eq:DR_psd}
\end{equation}
These statements do not imply that $C_R$ is symmetric or
positive semidefinite.
\end{proposition}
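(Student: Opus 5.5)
The proof mirrors Step~4 of the proof of Theorem~\ref{thm:main}, with the conservative measure $\mu_J$ replaced by the Galerkin measure $\nu$ and the general test functions $f,g$ replaced by dictionary pairs $\psi_i,\psi_j$.

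\emph{Skewness of $B_J$.} Fix indices $i,j$ and apply the product rule \eqref{eq:product_rule} with $f=\psi_j$, $g=\psi_i$:
\[
\nabla(\psi_i\psi_j)^\intercal\mathbf{v}_J=\psi_i\,\mathcal{K}_J\psi_j+\psi_j\,\mathcal{K}_J\psi_i .
\]
Integrate against $\nu$. The weak conservative condition \eqref{eq:nu_weak_conservative}, together with vanishing boundary terms, makes the left side zero. The right side is $(B_J)_{ij}+(B_J)_{ji}$ by \eqref{eq:BJ_AJ}, so $B_J=-B_J^\intercal$.

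The only delicate point is that \eqref{eq:nu_weak_conservative} must hold for every pair $(i,j)$, including $i=j$. It is stated directly on $\nabla(\psi_i\psi_j)^\intercal\mathbf{v}_J$, so we do not need a divergence or integration-by-parts argument. Integrability of each term follows from $\psi_i\in\mathcal{V}$ and the smoothness in Assumption~\ref{ass:functional_setting}.

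\emph{Mass-metric skewness of $A_J$.} Since $A_J=M^{-1}B_J$ and $M$ is symmetric and invertible ($M\succ0$), we have $MA_J=B_J$ and
\[
A_J^\intercal M=B_J^\intercal M^{-\intercal}M=B_J^\intercal .
\]
Hence $A_J^\intercal M+MA_J=B_J^\intercal+B_J=0$.

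\emph{Symmetry and semidefiniteness of $D_R$.} Symmetry holds pointwise because $\mathbf{R}=\mathbf{R}^\intercal$ gives $[\psi_i,\psi_j]_R=[\psi_j,\psi_i]_R$. For semidefiniteness, take any $c\in\mathbb{R}^N$ and set $f_c=\Psi^\intercal c$. Bilinearity of the metric bracket gives
\[
c^\intercal D_Rc=\int_{\mathcal{X}}[f_c,f_c]_R\,d\nu .
\]
The integrand is nonnegative pointwise by \eqref{eq:bracket_properties}, and $\nu$ is a nonnegative measure, so $c^\intercal D_Rc\ge0$.

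\emph{The disclaimer on $C_R$.} This needs only a counterexample. Take the one-dimensional example \eqref{eq:one_dim_counter_system}, where $\mathcal{K}_R f=xf'$. Its diagonal entries $\langle\psi,\mathcal{K}_R\psi\rangle$ can be strictly negative, as the bump function \eqref{eq:bell_inner_product_negative} shows, so $C_R$ is not positive semidefinite in general. For asymmetry, any pair such as $\psi_1$ a bump and $\psi_2=x\psi_1$ yields $(C_R)_{12}\neq(C_R)_{21}$ generically; this can be checked by direct computation if an explicit instance is wanted.
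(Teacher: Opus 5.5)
Your proof of the three displayed claims is the same as the paper's: integrating the product rule against $\nu$ gives $(B_J)_{ij}+(B_J)_{ji}=0$, symmetry of $M$ turns $MA_J=B_J$ into the mass-skew identity, and $c^\intercal D_Rc=\int[f_c,f_c]_R\,d\nu\ge0$.

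The paper does not prove the closing sentence about $C_R$ by counterexample; it only notes that its argument says nothing about $C_R$. Your non-positivity example is valid, since $\mathbf{J}=0$ makes the conservative condition hold trivially. Your suggested asymmetric pair, however, fails for the paper's even bump: with Lebesgue $\nu$ you get $(C_R)_{12}-(C_R)_{21}=\int x\psi_1^2\,dx=0$, and in fact $C_R=-\tfrac12 M$. Use an off-center bump instead, or take $\psi_2=x^2\psi_1$, which gives $(C_R)_{12}-(C_R)_{21}=2\int x^2\psi_1^2\,dx>0$.
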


\begin{proof}
Let $\mathbf{v}_J=\mathbf{J}\nabla\mathcal{H}$. For
dictionary functions $\psi_i$ and $\psi_j$,
\begin{equation}
  \nabla(\psi_i\psi_j)^\intercal\mathbf{v}_J
  =
  \psi_i\mathcal{K}_J\psi_j
  +
  \psi_j\mathcal{K}_J\psi_i .
  \label{eq:galerkin_product_rule}
\end{equation}
Integrating with respect to $\nu$, using
\eqref{eq:nu_weak_conservative} and the vanishing boundary terms
gives
\begin{equation}
  0=(B_J)_{ij}+(B_J)_{ji}.
  \label{eq:BJ_skew_entries}
\end{equation}
Therefore $B_J=-B_J^\intercal$. Since $A_J=M^{-1}B_J$, we
have $M A_J=B_J$, and hence
\begin{equation}
  A_J^\intercal M+M A_J
  =
  B_J^\intercal+B_J
  =
  0 .
  \label{eq:mass_skew_proof}
\end{equation}
For any $c\in\mathbb{R}^N$, let
$\phi_c=\sum_{i=1}^N c_i\psi_i$. Then
\begin{equation}
  c^\intercal D_R c
  =
  \int_{\mathcal{X}}
  \nabla\phi_c^\intercal\mathbf{R}\nabla\phi_c\,d\nu
  =
  \int_{\mathcal{X}}[\phi_c,\phi_c]_R\,d\nu
  \ge0 .
  \label{eq:DR_psd_proof}
\end{equation}
Symmetry of $D_R$ follows from symmetry of $\mathbf{R}$.
This argument concerns the metric-bracket matrix $D_R$; it
does not prove any sign property of the first-order
derivation matrix $C_R$.
\end{proof}


\begin{remark}[Why the dissipative derivation matrix is not the Dirichlet matrix]
The matrix
$(C_R)_{ij}=\langle\psi_i,\mathcal{K}_R\psi_j\rangle_\nu$
represents the first-order derivation
$\mathcal{K}_R\psi_j=[\psi_j,\mathcal{H}]_R$. The matrix
$(D_R)_{ij}=\int[\psi_i,\psi_j]_R\,d\nu$ represents the
symmetric metric bracket. These are different objects. In
general,
\begin{equation}
  \langle\phi,\mathcal{K}_R\phi\rangle_\nu
  =
  \int_{\mathcal{X}}\phi[\phi,\mathcal{H}]_R\,d\nu
  \label{eq:CR_no_fixed_sign}
\end{equation}
has no fixed sign, whereas
\begin{equation}
  \int_{\mathcal{X}}[\phi,\phi]_R\,d\nu\ge0 .
  \label{eq:DR_nonnegative_form}
\end{equation}
This distinction is the finite-dimensional counterpart of
the one-dimensional counterexample in Section~\ref{sec:decomposition}.
\end{remark}

\subsection{Projection Residuals and Closure Error}

Along the original nonlinear pH dynamics, the lifted coordinate
$z=\Psi(\mathbf{x})$ satisfies the exact identity
\begin{equation}
  \dot z
  =
  D\Psi(\mathbf{x})
  \left[
  (\mathbf{J}(\mathbf{x})-\mathbf{R}(\mathbf{x}))
  \nabla\mathcal{H}(\mathbf{x})
  +
  \mathbf{G}(\mathbf{x})\mathbf{u}
  \right].
  \label{eq:exact_lifted_derivative}
\end{equation}
For a generic finite-dimensional linear lifted model, one can
write
\begin{equation}
  \dot z=A_Nz+B_N\mathbf{u}+r_N^A(\mathbf{x},\mathbf{u}),
  \label{eq:generic_projection_residual}
\end{equation}
where $r_N^A$ is the closure or projection residual. For the
lifted pH surrogate introduced below, the corresponding
residual is
\begin{align}
  r_N(\mathbf{x},\mathbf{u})
  &:=
  D\Psi(\mathbf{x})
  \left[
  (\mathbf{J}-\mathbf{R})\nabla\mathcal{H}
  +
  \mathbf{G}\mathbf{u}
  \right]
  \nonumber\\
  &\quad
  -
  \left[
  (S-D)P\Psi(\mathbf{x})+B\mathbf{u}
  \right].
  \label{eq:ph_projection_residual}
\end{align}


Here $r_N(\mathbf{x},\mathbf{u})$ is the difference between the
exact lifted derivative and the surrogate prediction, with the
sign convention exact minus surrogate.
Thus $r_N=0$ only when the chosen finite-dimensional lifted
model exactly represents the lifted derivative on the
considered set. In Koopman language, this requires the
dictionary subspace to be closed under the relevant
generator actions, after projection onto the chosen
coordinates. For a generic nonlinear pH system and a finite
dictionary, one should expect $r_N\neq0$. This residual is
not a defect in the bracket identities; it is the ordinary
closure error caused by finite-dimensional approximation.

The weak matrices above should be interpreted accordingly.
The matrices $M$, $B_J$, $C_R$, and $D_R$ are population weak
objects defined with respect to $\nu$. The coefficient matrix
$A_J=M^{-1}B_J$ gives the weak projection of the conservative
derivation onto $\mathcal{V}_N$. Weak projection does not
imply exact pointwise closure of the lifted dynamics. Hence
exact algebraic identities such as
$A_J^\intercal M+M A_J=0$ and $D_R\succeq0$ should be
distinguished from the trajectory approximation accuracy of
the surrogate.

\subsection{Population Identities and Empirical Estimates}

The matrices
\[
  M,\qquad B_J,\qquad C_R,\qquad D_R
\]
are population quantities defined by integrals with respect to
$\nu$. In computations, one often replaces these integrals by
quadrature or empirical averages with
\begin{equation}
  \widehat{\nu}_{N_{\rm samp}}
  =
  \frac{1}{N_{\rm samp}}
  \sum_{k=1}^{N_{\rm samp}}\delta_{\mathbf{x}_k},
  \label{eq:empirical_measure_samples}
\end{equation}
where $N_{\rm samp}$ denotes the sample count. The corresponding
empirical inner product is
\begin{equation}
  \langle f,g\rangle_{\widehat{\nu}_{N_{\rm samp}}}
  =
  \frac{1}{N_{\rm samp}}
  \sum_{k=1}^{N_{\rm samp}}f(\mathbf{x}_k)g(\mathbf{x}_k).
  \label{eq:empirical_inner_product}
\end{equation}
When $\nu$ is replaced by $\widehat{\nu}_{N_{\rm samp}}$, the
same formulas produce finite-sample estimates
\[
  \widehat M,\qquad
  \widehat B_J,\qquad
  \widehat C_R,\qquad
  \widehat D_R
\]
of the population matrices. In the rest of the paper, the meaning
of the symbols is determined by the underlying inner product or
measure. Finite data and noise
introduce sampling and estimation error, so learned matrices
should not be identified with exact population Galerkin
matrices at finite sample size. If skewness or positive
semidefiniteness is needed exactly in a learned model, the
corresponding constraints should be imposed in the
optimization or enforced by projection or factorization.
This empirical generator viewpoint is closely related to
generator EDMD and data-driven Koopman-generator approximation
methods~\cite{klus2020generator}.

\subsection{Structure-Preserving Lifted pH Surrogates}

The weak identities above motivate finite-dimensional models
that preserve pH algebraic structure in lifted coordinates. A
general lifted pH surrogate has the form
\begin{equation}
  \dot z=(S-D)Pz+B\mathbf{u},
  \qquad
  \widetilde{\mathbf{y}}=B^\intercal Pz,
  \label{eq:kph_model}
\end{equation}
with
\begin{equation}
  S=-S^\intercal,\qquad
  D=D^\intercal\succeq0,\qquad
  P=P^\intercal\succ0 .
  \label{eq:structure_constraints}
\end{equation}
The lifted storage is
\begin{equation}
  \widetilde{\mathcal{H}}(z)
  =
  \frac12 z^\intercal Pz .
  \label{eq:lifted_storage}
\end{equation}
The output $\widetilde{\mathbf{y}}$ is the power-conjugate output
associated with the lifted storage
$\widetilde{\mathcal{H}}(z)=\frac12z^\intercal Pz$, mirroring
$\mathbf{y}=\mathbf{G}^\intercal\nabla\mathcal{H}$ in the
original pH system.
This surrogate is a structure-preserving finite-dimensional
model inspired by the bracket identities. It is not claimed
to be an exact finite-dimensional Koopman embedding unless the
dictionary subspace is invariant and the projection or closure
residual vanishes.
For state-dependent port maps $\mathbf{G}(\mathbf{x})$, the
constant matrix $B$ is a finite-dimensional approximation of the
lifted input vector fields
$D\Psi(\mathbf{x})\mathbf{G}(\mathbf{x})$. Exact representation
requires these lifted input fields to be representable with
constant coefficients in the chosen lifted coordinates; otherwise
their mismatch is included in $r_N(\mathbf{x},\mathbf{u})$.

\subsection{Finite-Data Identification of the Lifted pH Surrogate}

Given data
\begin{equation}
  \{(\mathbf{x}_k,\mathbf{u}_k,\dot z_k)\}_{k=1}^{N_{\rm samp}},
  \qquad
  z_k=\Psi(\mathbf{x}_k),
  \qquad
  \dot z_k\approx D\Psi(\mathbf{x}_k)\dot{\mathbf{x}}_k,
  \label{eq:lifted_derivative_data}
\end{equation}
a structure-preserving continuous-time fit can be formulated as
\begin{align}
  \min_{S,D,B}\quad
  &
  \sum_{k=1}^{N_{\rm samp}}
  \left\|
  \dot z_k-(S-D)Pz_k-B\mathbf{u}_k
  \right\|_2^2
  \nonumber\\
  &\quad
  +\lambda_S\|S\|_F^2
  +\lambda_D\|D\|_F^2
  \nonumber\\
  \mathrm{s.t.}\quad
  &
  S=-S^\intercal,\qquad
  D=D^\intercal\succeq0 .
  \label{eq:finite_data_ph_fit}
\end{align}


This constrained surrogate-fitting viewpoint is related to
structure-preserving pH identification methods such as pH-DMD
and non-intrusive pH realization from time-domain data
\cite{morandin2023,cherifi2022}.
One may either fix $P=P^\intercal\succ0$ from a chosen
storage metric or learn it with additional regularization and
constraints. The passivity statements below are conditional on
the fitted matrices satisfying the pH constraints. In
implementations, the constraints can be enforced by
parameterizing
\[
  S=\frac12(\Theta-\Theta^\intercal),
  \qquad
  D=LL^\intercal,
\]
for unconstrained matrices $\Theta$ and $L$.
The regularization terms control the size of the fitted matrices
but do not by themselves enforce pH structure. The constraints are
enforced either directly or through parameterizations such as the
ones above. If exact
derivative data $\dot z_k$ are unavailable, they may be
estimated from time-series data or replaced by a discrete-time
residual. This introduces additional numerical error; the
continuous-time identities in this paper are population and
model identities, not guarantees that derivative estimation is
exact.

\begin{corollary}[Passivity of the lifted pH surrogate]
\label{cor:passivity}
If
\[
  \begin{gathered}
    \dot z=(S-D)Pz+B\mathbf{u},\qquad
    S=-S^\intercal,\\
    D=D^\intercal\succeq0,\qquad
    P=P^\intercal\succ0,
  \end{gathered}
\]
and
\[
  \widetilde{\mathcal{H}}(z)=\frac12 z^\intercal Pz,
  \qquad
  \widetilde{\mathbf{y}}=B^\intercal Pz,
\]
then
\begin{equation}
  \dot{\widetilde{\mathcal{H}}}
  =
  -(Pz)^\intercal D(Pz)
  +
  \widetilde{\mathbf{y}}^\intercal\mathbf{u}
  \le
  \widetilde{\mathbf{y}}^\intercal\mathbf{u}.
  \label{eq:lifted_energy_balance}
\end{equation}
\end{corollary}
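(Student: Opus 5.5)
The argument is the finite-dimensional analogue of Step~3 in the proof of Theorem~\ref{thm:main}: differentiate the quadratic storage $\widetilde{\mathcal{H}}(z)=\frac12 z^\intercal Pz$ along trajectories of the lifted surrogate and split the result using the algebraic constraints~\eqref{eq:structure_constraints}. There is no projection residual here, since the statement concerns only the surrogate dynamics in lifted coordinates, so the computation is purely algebraic.

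First, since $P=P^\intercal$, the gradient of the storage is $\nabla_z\widetilde{\mathcal{H}}=Pz$. The chain rule then gives
\[
  \dot{\widetilde{\mathcal{H}}}
  =(Pz)^\intercal\dot z
  =(Pz)^\intercal(S-D)(Pz)+(Pz)^\intercal B\mathbf{u}.
\]
Writing $w=Pz$, this splits into three terms.

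\begin{itemize}
\item \emph{Conservative term.} The term $w^\intercal S w$ is a scalar equal to its own transpose $w^\intercal S^\intercal w=-w^\intercal Sw$, so it vanishes. This mirrors $\{\mathcal{H},\mathcal{H}\}_J=0$ in~\eqref{eq:bracket_properties}.
\item \emph{Port term.} This is $(Pz)^\intercal B\mathbf{u}=(B^\intercal Pz)^\intercal\mathbf{u}=\widetilde{\mathbf{y}}^\intercal\mathbf{u}$, by the definition of the lifted output.
\item \emph{Dissipative term.} The remaining term is $-(Pz)^\intercal D(Pz)$, which is nonpositive because $D\succeq0$. This is the lifted counterpart of $-[\mathcal{H},\mathcal{H}]_R\le0$.
\end{itemize}

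Collecting the three terms yields the identity~\eqref{eq:lifted_energy_balance} together with the inequality. The only step needing any care is correctly identifying $\nabla_z\widetilde{\mathcal{H}}=Pz$, which uses the symmetry of $P$. Positive definiteness of $P$ is not needed for the balance identity itself; it only makes $\widetilde{\mathcal{H}}$ a genuine nonnegative storage function, which is what turns the inequality into passivity in the usual sense. Integrating the inequality in time then gives $\widetilde{\mathcal{H}}(z(t))-\widetilde{\mathcal{H}}(z(0))\le\int_0^t\widetilde{\mathbf{y}}^\intercal\mathbf{u}\,ds$.
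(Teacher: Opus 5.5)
Your proposal is correct and follows the paper's proof. Like the paper, you differentiate $\widetilde{\mathcal{H}}$ along the surrogate to get $(Pz)^\intercal(S-D)(Pz)+(Pz)^\intercal B\mathbf{u}$, drop the $S$ term by skew-symmetry, sign the $D$ term using $D\succeq0$, and identify the port term as $\widetilde{\mathbf{y}}^\intercal\mathbf{u}$; your added remarks on symmetry versus positive definiteness of $P$ are accurate.
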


\begin{proof}
Differentiating the lifted storage along the surrogate gives
\begin{align}
  \dot{\widetilde{\mathcal{H}}}
  &=
  z^\intercal P(S-D)Pz
  +
  z^\intercal PB\mathbf{u} \nonumber\\
  &=
  (Pz)^\intercal S(Pz)
  -
  (Pz)^\intercal D(Pz)
  +
  (B^\intercal Pz)^\intercal\mathbf{u}.
  \label{eq:lifted_passivity_proof}
\end{align}
Since $S=-S^\intercal$, $(Pz)^\intercal S(Pz)=0$. Since
$D\succeq0$, $-(Pz)^\intercal D(Pz)\le0$. Finally,
$(B^\intercal Pz)^\intercal\mathbf{u}
=\widetilde{\mathbf{y}}^\intercal\mathbf{u}$, which proves
\eqref{eq:lifted_energy_balance}.
\end{proof}

\begin{remark}[Projection residuals and scope of the surrogate]
\label{rem:lifting_gap}
Along the original nonlinear dynamics, suppose the exact
lifted derivative is represented as
\[
  \dot z=(S-D)Pz+B\mathbf{u}+r_N(\mathbf{x},\mathbf{u}).
\]
Then the lifted storage evolves as
\begin{align}
  \dot{\widetilde{\mathcal{H}}}
  &=
  -(Pz)^\intercal D(Pz)
  +
  \widetilde{\mathbf{y}}^\intercal\mathbf{u}
  +
  (Pz)^\intercal r_N(\mathbf{x},\mathbf{u}).
  \label{eq:lifted_residual_balance}
\end{align}
Thus Corollary~\ref{cor:passivity} is an exact statement
about the fitted or constructed finite-dimensional pH
surrogate. Transferring passivity or stability back to the
original nonlinear plant requires a bound or compensation for
the residual term, together with injectivity or local
observability of $\Psi$.
\end{remark}

The next section uses the lifted pH surrogate as the object
of control design. All passivity and stability statements
there are statements about the finite-dimensional lifted pH
model unless additional residual and injectivity assumptions
are explicitly invoked.

\section{Passivity-Based Control for the Lifted pH Surrogate}
\label{sec:control}

The control design in this section is formulated for the
finite-dimensional lifted pH surrogate introduced in
Section~\ref{sec:approximation}:
\begin{equation}
  \begin{gathered}
    \dot z=(S-D)Pz+B\mathbf{u},\\
    S=-S^\intercal,\qquad
    D=D^\intercal\succeq0,\qquad
    P=P^\intercal\succ0,
  \end{gathered}
  \label{eq:control_surrogate}
\end{equation}
where $z\in\mathbb{R}^N$, $S,D,P\in\mathbb{R}^{N\times N}$,
and $B\in\mathbb{R}^{N\times m}$. The storage and output are
\begin{equation}
  \widetilde{\mathcal{H}}(z)=\frac12 z^\intercal Pz,
  \qquad
  \widetilde{\mathbf{y}}=B^\intercal Pz .
  \label{eq:control_storage_output}
\end{equation}
All passivity and stability statements in this section are
statements about the surrogate unless additional residual and
injectivity assumptions are explicitly invoked.

\subsection{Damping Injection}

We consider the damping-injection law
\begin{equation}
  \mathbf{u}
  =
  -K_d\widetilde{\mathbf{y}}
  +
  \mathbf{v},
  \qquad
  K_d=K_d^\intercal\succeq0,
  \label{eq:damping}
\end{equation}
where $\mathbf{v}$ is an external input or feedforward signal.
When $K_d\succ0$, the injected port damping directly penalizes all
nonzero components of the surrogate output
$\widetilde{\mathbf{y}}$, although asymptotic stability still
depends on the zero-dissipation detectability condition.

\begin{theorem}[Damping injection for the lifted pH surrogate]
\label{thm:stability}
Consider the lifted pH surrogate~\eqref{eq:control_surrogate}
with storage and output~\eqref{eq:control_storage_output}.
Under the damping-injection law~\eqref{eq:damping}, the
closed-loop storage satisfies
\begin{equation}
  \dot{\widetilde{\mathcal{H}}}
  =
  -(Pz)^\intercal D(Pz)
  -
  \widetilde{\mathbf{y}}^\intercal K_d\widetilde{\mathbf{y}}
  +
  \widetilde{\mathbf{y}}^\intercal\mathbf{v}
  \le
  \widetilde{\mathbf{y}}^\intercal\mathbf{v}.
  \label{eq:damping_storage_balance}
\end{equation}
In particular, for $\mathbf{v}=0$, the surrogate storage is
nonincreasing.

Let
\begin{equation}
  A_{\rm cl}
  :=
  (S-D)P-BK_dB^\intercal P
  =
  (S-D-BK_dB^\intercal)P
  \label{eq:surrogate_acl}
\end{equation}
and define the zero-dissipation output matrix
\begin{equation}
  C_{\rm diss}
  :=
  \begin{bmatrix}
    D^{1/2}P\\
    K_d^{1/2}B^\intercal P
  \end{bmatrix}.
  \label{eq:zero_diss_output}
\end{equation}
Here $D^{1/2}$ and $K_d^{1/2}$ denote the symmetric positive
semidefinite square roots of $D$ and $K_d$, respectively.
If the pair $(A_{\rm cl},C_{\rm diss})$ is detectable, then
the equilibrium $z=0$ of the surrogate with $\mathbf{v}=0$
is asymptotically stable.
\end{theorem}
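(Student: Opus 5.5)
The plan has two parts: a storage balance obtained by substitution into Corollary~\ref{cor:passivity}, and a linear Lyapunov argument for the stability claim.

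\emph{Storage balance.} Corollary~\ref{cor:passivity} gives
$\dot{\widetilde{\mathcal{H}}}=-(Pz)^\intercal D(Pz)+\widetilde{\mathbf{y}}^\intercal\mathbf{u}$
for every input signal $\mathbf{u}$. Substituting the law~\eqref{eq:damping}, the port term becomes
$\widetilde{\mathbf{y}}^\intercal\mathbf{u}=-\widetilde{\mathbf{y}}^\intercal K_d\widetilde{\mathbf{y}}+\widetilde{\mathbf{y}}^\intercal\mathbf{v}$.
This is~\eqref{eq:damping_storage_balance}. The inequality follows because $D\succeq0$ and $K_d\succeq0$ make both quadratic terms nonpositive. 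Setting $\mathbf{v}=0$ gives monotone nonincrease of the storage.

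\emph{Closed-loop matrix and Lyapunov function.} With $\mathbf{v}=0$, the closed loop is $\dot z=A_{\rm cl}z$. Here $A_{\rm cl}=(S-D-BK_dB^\intercal)P$, as in~\eqref{eq:surrogate_acl}, since $\widetilde{\mathbf{y}}=B^\intercal Pz$. I will use $V(z)=\widetilde{\mathcal{H}}(z)=\tfrac12 z^\intercal Pz$, which is positive definite because $P\succ0$. Writing $\widetilde{\mathbf{y}}^\intercal K_d\widetilde{\mathbf{y}}=\|K_d^{1/2}B^\intercal Pz\|^2$ and $(Pz)^\intercal D(Pz)=\|D^{1/2}Pz\|^2$, the balance becomes
\[
  \dot V=-\|C_{\rm diss}z\|^2 .
\]
Equivalently, $A_{\rm cl}^\intercal P+PA_{\rm cl}=-2C_{\rm diss}^\intercal C_{\rm diss}$. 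Lyapunov stability of $z=0$ is then immediate.

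\emph{Asymptotic stability (main step).} The remaining task is to upgrade stability to asymptotic stability using detectability of $(A_{\rm cl},C_{\rm diss})$. I would use the standard linear Lyapunov lemma: if $A^\intercal P+PA=-C^\intercal C$ with $P\succ0$ and $(A,C)$ detectable, then $A$ is Hurwitz.

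The proof of this lemma goes by contradiction. Let $A_{\rm cl}w=\lambda w$ with $w\neq0$ and $\operatorname{Re}\lambda\ge0$. Multiplying the Lyapunov identity by $w^*$ on the left and $w$ on the right gives
\[
  2\operatorname{Re}\lambda\,w^*Pw=-2\|C_{\rm diss}w\|^2 .
\]
The left side is nonnegative and the right side is nonpositive, so both vanish. Hence $C_{\rm diss}w=0$ is an eigenvector of $A_{\rm cl}$ with $\operatorname{Re}\lambda\ge0$ that the output cannot see. This contradicts the PBH characterization of detectability. Therefore every eigenvalue of $A_{\rm cl}$ has negative real part, and since the closed loop is linear, $z=0$ is (globally, exponentially) asymptotically stable.

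As a cross-check, one could instead apply LaSalle's principle. The set $\{C_{\rm diss}z=0\}$ contains only trajectories that stay unobservable, and detectability together with boundedness of $V$ forces those trajectories to decay. This route is more delicate, because detectability constrains only the unstable modes, so I would keep the eigenvector argument as the main proof.
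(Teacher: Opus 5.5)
Your proof is correct, but it reaches asymptotic stability by a different route than the paper. The storage-balance part matches: the paper differentiates $\widetilde{\mathcal{H}}$ directly rather than citing the passivity corollary, but the substitution is the same.

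For the stability claim the routes diverge. The paper uses LaSalle's invariance principle. Since $\dot{\widetilde{\mathcal{H}}}=0$ exactly on $\ker C_{\rm diss}$, trajectories approach the largest $A_{\rm cl}$-invariant subspace in $\ker C_{\rm diss}$, which is the unobservable subspace of $(A_{\rm cl},C_{\rm diss})$. Detectability then makes the dynamics there decay.

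You instead turn the storage balance into the Lyapunov identity $A_{\rm cl}^\intercal P+PA_{\rm cl}=-2C_{\rm diss}^\intercal C_{\rm diss}$. This checks out directly from $S=-S^\intercal$ and $C_{\rm diss}^\intercal C_{\rm diss}=PDP+PBK_dB^\intercal P$. You then apply the PBH eigenvector test, using $w^*Pw>0$ for complex $w\neq0$, to conclude that $A_{\rm cl}$ is Hurwitz.

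What your approach buys: it is purely algebraic, it avoids identifying the largest invariant set with the unobservable subspace, and it gives global exponential stability outright. What the paper's approach buys: it follows the nonlinear passivity-based-control template (zero-state detectability), so it would carry over to nonquadratic storages or nonlinear surrogates.

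The paper's final step is terse, though. One needs that the $\omega$-limit set is nonempty and invariant, lies in the unobservable subspace where detectability makes the restricted dynamics Hurwitz, and carries a constant value of $\widetilde{\mathcal{H}}$. Together these force it to be $\{0\}$. This also answers your hesitation: the LaSalle route is not really more delicate, it just needs that extra sentence.
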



\begin{proof}
Differentiating the storage along the surrogate gives
\begin{align}
  \dot{\widetilde{\mathcal{H}}}
  &=
  z^\intercal P\dot z \nonumber\\
  &=
  z^\intercal P(S-D)Pz
  +
  z^\intercal PB\mathbf{u}.
  \label{eq:damping_proof_derivative}
\end{align}
Since $S=-S^\intercal$,
\[
  z^\intercal PSPz=(Pz)^\intercal S(Pz)=0.
\]
Since $D\succeq0$,
\[
  z^\intercal PDPz=(Pz)^\intercal D(Pz)\ge0.
\]
Finally,
\[
  z^\intercal PB\mathbf{u}
  =
  (B^\intercal Pz)^\intercal\mathbf{u}
  =
  \widetilde{\mathbf{y}}^\intercal\mathbf{u}.
\]
Substituting
$\mathbf{u}=-K_d\widetilde{\mathbf{y}}+\mathbf{v}$
gives~\eqref{eq:damping_storage_balance}. For
$\mathbf{v}=0$, the right-hand side is nonpositive, so the
storage is nonincreasing.

For $\mathbf{v}=0$, equality
$\dot{\widetilde{\mathcal{H}}}=0$ holds exactly on
$\ker C_{\rm diss}$. LaSalle's invariance principle for the
finite-dimensional linear system $\dot z=A_{\rm cl}z$ implies
that trajectories converge to the largest $A_{\rm cl}$-invariant
subspace contained in $\ker C_{\rm diss}$, equivalently the
unobservable subspace of $(A_{\rm cl},C_{\rm diss})$
\cite{khalil2002nonlinear}. Detectability of this pair implies
that all modes on this subspace are asymptotically stable. Hence
$z(t)\to0$. Since $P\succ0$, the storage is positive definite, so
the surrogate origin is asymptotically stable.
\end{proof}

\begin{remark}[Original-state conclusions]
\label{rem:original_state_control}
Theorem~\ref{thm:stability} proves asymptotic stability of
$z=0$ for the finite-dimensional surrogate. Let
$z=\Psi(\mathbf{x})$ and let $\mathbf{x}_\star$ be an
original-space equilibrium with $\Psi(\mathbf{x}_\star)=0$,
or work in shifted coordinates so that this condition holds.
The conclusion $\mathbf{x}(t)\to\mathbf{x}_\star$ requires
additional assumptions. Sufficient conditions include that the
projection or closure residual $r_N(\mathbf{x},\mathbf{u})$
vanishes along the considered trajectory, or is small enough
to be dominated by the dissipative terms, and that the lifting
map $\Psi$ is locally injective or locally observable near
$\mathbf{x}_\star$. In particular,
$\Psi(\mathbf{x}_k)\to0$ should imply
$\mathbf{x}_k\to\mathbf{x}_\star$ for states in the
considered neighborhood. If a residual is present, then
Section~\ref{sec:approximation} shows that the true lifted
storage derivative contains the extra term
\[
  (Pz)^\intercal r_N(\mathbf{x},\mathbf{u}).
\]
When the controller is transferred to the original plant, the
measured or implemented output must also be consistent with the
surrogate output $\widetilde{\mathbf{y}}=B^\intercal
P\Psi(\mathbf{x})$. If the physical output
$\mathbf{y}=\mathbf{G}(\mathbf{x})^\intercal
\nabla\mathcal{H}(\mathbf{x})$ is used instead, an additional
output approximation error
$\mathbf{y}-\widetilde{\mathbf{y}}$ enters the passivity
calculation.
Surrogate stability should therefore not be presented as plant
stability without residual and injectivity qualifications.
\end{remark}

\begin{remark}[Role of the storage matrix \(P\)]
The control identities above use a general storage matrix
$P=P^\intercal\succ0$. The output paired with this storage is
$\widetilde{\mathbf{y}}=B^\intercal Pz$. For $P=I$, this
reduces to the Euclidean lifted storage and output
$B^\intercal z$. For general $P$, the energy identity must use
the output consistent with the chosen storage metric. This is
why the damping-injection law~\eqref{eq:damping} uses
$\widetilde{\mathbf{y}}$, not an output defined independently
of $P$.
\end{remark}

\subsection{Reference Tracking and Error Coordinates}

The theorem is stated at the origin. For a constant target
$z_\star$, define $e=z-z_\star$ only when there exists a steady
input $\mathbf{u}_\star$ satisfying
$0=(S-D)Pz_\star+B\mathbf{u}_\star$. The same calculation applies
to $e$ if the shifted dynamics retain the same pH form. Otherwise
tracking requires additional feasibility and local-model
assumptions.

\subsection{MPC Formulation for the Lifted Surrogate}
\label{sec:mpc}

Because the lifted pH surrogate is finite-dimensional and
control-affine, it can be used inside an MPC problem. This MPC
problem should be interpreted as a controller for the
surrogate. Recursive feasibility and closed-loop stability
require standard terminal-set and terminal-cost assumptions;
they are not automatic consequences of $D\succeq0$.

Let
\begin{equation}
  z_{k+1}=A_dz_k+B_d\mathbf{u}_k
  \label{eq:discrete_surrogate}
\end{equation}
be a discrete-time model obtained from the surrogate, for
example by exact discretization or a numerical integration
scheme. The symbols $Q_{\rm mpc}$ and $R_{\rm mpc}$ below are MPC
design weights and should not be confused with the pH storage
metric or dissipation matrix. For a horizon $N_h$, a typical
finite-horizon problem is
\begin{equation}
\begin{aligned}
  \min_{\{\mathbf{u}_k\}_{k=0}^{N_h-1}}\quad
  &
  \sum_{k=0}^{N_h-1}
  \left(
  \|z_k-z_{\rm ref}\|_{Q_{\rm mpc}}^2
  +
  \|\mathbf{u}_k\|_{R_{\rm mpc}}^2
  \right)
  \\
  &\quad
  +
  \|z_{N_h}-z_{\rm ref}\|_{P_f}^2 \\
  \mathrm{s.t.}\quad
  &
  z_{k+1}=A_dz_k+B_d\mathbf{u}_k,
\end{aligned}
\label{eq:mpc}
\end{equation}
together with any required state or input constraints. Here
$Q_{\rm mpc}\succeq0$ and $R_{\rm mpc}\succ0$ are MPC cost
weights, and $P_f\succeq0$ is a terminal cost.


A terminal cost and terminal set may be chosen using standard
linear MPC tools for $(A_d,B_d)$~\cite{korda2018linear}. If
$P_f$ is chosen from a discrete-time Lyapunov or Riccati
equation, it is an MPC design object. It need not equal the pH
storage matrix $P$, although choosing related weights may be
useful. Stability of the MPC closed loop is a property of the
resulting constrained optimization problem and terminal
ingredients, not an automatic consequence of the Koopman
bracket decomposition.

\begin{remark}[Discrete-time energy balance]
\label{rem:dt_lyapunov}
For the forward Euler discretization
\[
  z_{k+1}
  =
  z_k
  +
  \Delta t\left((S-D)Pz_k+B\mathbf{u}_k\right),
\]
one obtains the first-order balance
\begin{align}
  &\widetilde{\mathcal{H}}(z_{k+1})
  -
  \widetilde{\mathcal{H}}(z_k)
  \nonumber\\
  &\quad
  =
  \Delta t
  \left(
  -(Pz_k)^\intercal D(Pz_k)
  +
  \widetilde{\mathbf{y}}_k^\intercal\mathbf{u}_k
  \right)
  +
  O(\Delta t^2).
  \label{eq:dt_energy}
\end{align}
For finite $\Delta t$, the $O(\Delta t^2)$ term can have
indefinite sign. Exact discrete-time passivity can instead be
obtained by using a passivity-preserving discretization, such as
an appropriate midpoint, discrete-gradient, or collocation method,
or by imposing a suitable step-size condition. The continuous-time
passivity inequality does not automatically hold exactly for
arbitrary time steps.
\end{remark}

\section{Illustrative Examples}
\label{sec:examples}
\label{sec:example}

The examples below illustrate the corrected bracket
interpretation. They distinguish exact pH vector-field
identities, weak Galerkin and Dirichlet matrices,
finite-dimensional surrogate models, and closure residuals.
They do not claim positivity of the first-order dissipative
derivation $\mathcal{K}_R$ on arbitrary observables; the
one-dimensional counterexample in Section~\ref{sec:decomposition}
shows why such a claim is false.

\subsection{Linear pH Systems: Exact Lifted pH Representation}
\label{sec:example_linear}

Consider the linear pH system
\begin{equation}
  \dot{\mathbf{x}}
  =
  (\mathbf{J}-\mathbf{R})Q\mathbf{x}
  +
  \mathbf{G}\mathbf{u},
  \qquad
  \mathcal{H}(\mathbf{x})
  =
  \frac12\mathbf{x}^\intercal Q\mathbf{x},
  \label{eq:linear_ph}
\end{equation}
where
\[
  Q=Q^\intercal\succ0,\qquad
  \mathbf{J}^\intercal=-\mathbf{J},\qquad
  \mathbf{R}=\mathbf{R}^\intercal\succeq0.
\]
Then $\nabla\mathcal{H}=Q\mathbf{x}$, and the bracket
identities give
\begin{equation}
  \{\mathcal{H},\mathcal{H}\}_J
  =
  (Q\mathbf{x})^\intercal\mathbf{J}Q\mathbf{x}
  =
  0,
  \label{eq:linear_interconnection_zero}
\end{equation}
while
\begin{equation}
  [\mathcal{H},\mathcal{H}]_R
  =
  (Q\mathbf{x})^\intercal\mathbf{R}Q\mathbf{x}
  \ge0.
  \label{eq:linear_metric_hamiltonian}
\end{equation}
Consequently, for the unforced drift,
\begin{equation}
  \mathcal{K}_0\mathcal{H}
  =
  -[\mathcal{H},\mathcal{H}]_R
  =
  -(Q\mathbf{x})^\intercal\mathbf{R}Q\mathbf{x}
  \le0.
  \label{eq:linear_energy_unforced}
\end{equation}
With input,
\begin{equation}
  \mathcal{K}_{\mathbf{u}}\mathcal{H}
  =
  -[\mathcal{H},\mathcal{H}]_R
  +
  \mathbf{y}^\intercal\mathbf{u},
  \qquad
  \mathbf{y}=\mathbf{G}^\intercal Q\mathbf{x}.
  \label{eq:linear_energy_forced}
\end{equation}

\begin{proposition}[Linear pH systems have zero closure residual in state coordinates]
\label{prop:linear_exact}
For the linear pH system~\eqref{eq:linear_ph} and dictionary
$z=\Psi(\mathbf{x})=\mathbf{x}$, the lifted pH surrogate with
\begin{equation}
  S=\mathbf{J},\qquad
  D=\mathbf{R},\qquad
  P=Q,\qquad
  B=\mathbf{G}
  \label{eq:linear_recovery}
\end{equation}
reproduces the original dynamics exactly and has zero closure
residual.
\end{proposition}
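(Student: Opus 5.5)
The plan is a direct substitution into the residual definition~\eqref{eq:ph_projection_residual}, followed by a check that the chosen matrices satisfy the structural constraints~\eqref{eq:structure_constraints}. No approximation or projection argument is needed, because the identity dictionary makes the lifting map linear.

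\emph{Admissibility.} First I will verify that the choice~\eqref{eq:linear_recovery} is an admissible lifted pH surrogate:
\begin{itemize}
\item $S=\mathbf{J}$ is skew because $\mathbf{J}^\intercal=-\mathbf{J}$;
\item $D=\mathbf{R}$ is symmetric positive semidefinite by hypothesis;
\item $P=Q$ is symmetric positive definite.
\end{itemize}
So~\eqref{eq:kph_model} with these matrices is a legitimate instance of the surrogate class, and Corollary~\ref{cor:passivity} applies to it.

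\emph{Zero residual.} With $\Psi(\mathbf{x})=\mathbf{x}$, the Jacobian is $D\Psi(\mathbf{x})=I_n$ for every $\mathbf{x}$. Hence the exact lifted derivative~\eqref{eq:exact_lifted_derivative} reduces to the original vector field $(\mathbf{J}-\mathbf{R})Q\mathbf{x}+\mathbf{G}\mathbf{u}$, using $\nabla\mathcal{H}=Q\mathbf{x}$. The surrogate prediction is $(S-D)P\Psi(\mathbf{x})+B\mathbf{u}=(\mathbf{J}-\mathbf{R})Q\mathbf{x}+\mathbf{G}\mathbf{u}$. The two expressions coincide identically in $(\mathbf{x},\mathbf{u})$, so $r_N(\mathbf{x},\mathbf{u})=0$ on all of $\mathcal{X}\times\mathbb{R}^m$. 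In Koopman terms, the span of the coordinate functions is invariant under $\mathcal{K}_{\mathbf{u}}$, and the constant matrix $B=\mathbf{G}$ represents the input field exactly because $\mathbf{G}$ is constant.

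\emph{Matching storage and output.} For completeness I will also check that the surrogate storage and output coincide with the original ones:
\begin{itemize}
\item $\widetilde{\mathcal{H}}(z)=\tfrac12 z^\intercal Qz=\mathcal{H}(\mathbf{x})$;
\item $\widetilde{\mathbf{y}}=B^\intercal Pz=\mathbf{G}^\intercal Q\mathbf{x}=\mathbf{y}$.
\end{itemize}
Consequently the surrogate energy balance~\eqref{eq:lifted_energy_balance} reproduces~\eqref{eq:linear_energy_forced}. This also shows that the residual term $(Pz)^\intercal r_N$ in~\eqref{eq:lifted_residual_balance} vanishes, and there is no output mismatch of the kind discussed in Remark~\ref{rem:original_state_control}.

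\emph{Main difficulty.} There is no substantive obstacle. The only care needed is with conventions: the lifted state is $z=\mathbf{x}$ as a column vector, so the coordinate-dynamics matrix is $(S-D)P$ itself and not the transpose of a coefficient-space operator, per Remark~\ref{rem:coefficient_coordinate}.
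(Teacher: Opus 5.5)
Your proof is correct and follows the paper's argument. With $D\Psi=I_n$, the exact lifted derivative is $(\mathbf{J}-\mathbf{R})Q\mathbf{x}+\mathbf{G}\mathbf{u}=(S-D)Pz+B\mathbf{u}$, so $r_N\equiv0$; your admissibility and storage/output checks are harmless additions, and the paper records the storage match right after its proof.

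One wording fix in your Koopman aside: for $\mathbf{u}\neq0$, $\mathcal{K}_{\mathbf{u}}x_i$ contains the constant $(\mathbf{G}\mathbf{u})_i$, which is not in $\operatorname{span}\{x_1,\ldots,x_n\}$. Invariance should therefore be stated for the unforced generator $\mathcal{K}_0$, as the paper does, with the input handled separately through $D\Psi\,\mathbf{G}=\mathbf{G}$ --- which your next clause already does.
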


\begin{proof}
With $z=\mathbf{x}$, the lifted derivative is
\begin{equation}
  \dot z
  =
  (\mathbf{J}-\mathbf{R})Qz+\mathbf{G}\mathbf{u}
  =
  (S-D)Pz+B\mathbf{u}.
  \label{eq:linear_exact_surrogate}
\end{equation}
This is exactly the original vector field in lifted
coordinates, so the residual $r_N$ from
Section~\ref{sec:approximation} vanishes.
For the identity dictionary $z=\Psi(\mathbf{x})=\mathbf{x}$,
the dictionary subspace
$\mathcal{V}_N=\operatorname{span}\{x_1,\ldots,x_n\}$ is
invariant under the unforced linear generator because
$\mathcal{K}_0x_i=((\mathbf{J}-\mathbf{R})Q\mathbf{x})_i$ is
linear in $\mathbf{x}$. The lifted input term is also represented
exactly because $D\Psi(\mathbf{x})\mathbf{G}=\mathbf{G}$. Hence
the closure residual vanishes.
\end{proof}

For the state-coordinate choice $z=\mathbf{x}$, the lifted storage
is
\begin{equation}
  \widetilde{\mathcal{H}}(z)
  =
  \frac12z^\intercal Pz
  =
  \frac12\mathbf{x}^\intercal Q\mathbf{x}
  =
  \mathcal{H}(\mathbf{x}).
  \label{eq:linear_storage_match}
\end{equation}
Thus the linear example gives an exact pH representation of
the vector field in lifted coordinates. It is not a proof that
the weak Galerkin matrix of the first-order derivation
$\mathcal{K}_R$ has any sign guarantee.

The distinction is visible even for the identity dictionary
$\psi_i(\mathbf{x})=x_i$. If $\nu$ is a probability measure
and $\mathbf{R}$ is constant, then
\begin{equation}
  (D_R)_{ij}
  =
  \int
  \nabla x_i^\intercal\mathbf{R}\nabla x_j\,d\nu
  =
  R_{ij}.
  \label{eq:linear_dirichlet_identity}
\end{equation}
By contrast,
\begin{equation}
  (C_R)_{ij}
  =
  \int x_i[x_j,\mathcal{H}]_R\,d\nu
  =
  \int x_i e_j^\intercal\mathbf{R}Q\mathbf{x}\,d\nu
  \label{eq:linear_CR_identity}
\end{equation}
depends on the second moments of $\nu$ and is generally not
the same object as $D_R$. The positive matrix in the Galerkin
bracket sense is $D_R$, while the exact state-space pH
dissipation matrix in the linear vector field is
$\mathbf{R}$.

For the effort-coordinate choice
\[
  z=Q\mathbf{x}=\nabla\mathcal{H},
\]
then
\[
  \dot z=Q(\mathbf{J}-\mathbf{R})z+Q\mathbf{G}\mathbf{u}.
\]
This can still be written in lifted pH form with
\[
  P=Q^{-1},\qquad
  S=Q\mathbf{J}Q,\qquad
  D=Q\mathbf{R}Q,\qquad
  B=Q\mathbf{G},
\]
because
\[
  (S-D)Pz
  =
  (Q\mathbf{J}Q-Q\mathbf{R}Q)Q^{-1}z
  =
  Q(\mathbf{J}-\mathbf{R})z.
\]
Since $Q\mathbf{J}Q$ is skew-symmetric and
$Q\mathbf{R}Q$ is positive semidefinite, the pH structure is
preserved under this coordinate change. Exact representation
therefore depends on both the chosen coordinates and the
storage metric $P$.

\subsection{Damped Nonlinear Pendulum: Brackets and Closure Residual}
\label{sec:example_nonlinear}

Let $\mathbf{x}=(\theta,p)^\intercal$ and
\[
  \mathcal{H}(\theta,p)=\frac12p^2+1-\cos\theta.
\]
Consider the damped pendulum with
\begin{equation}
  \mathbf{J}
  =
  \begin{bmatrix}
  0 & 1\\
  -1 & 0
  \end{bmatrix},
  \quad
  \mathbf{R}
  =
  \begin{bmatrix}
  0 & 0\\
  0 & b
  \end{bmatrix},
  \quad
  \mathbf{G}
  =
  \begin{bmatrix}
  0\\
  1
  \end{bmatrix},
  \quad b>0.
  \label{eq:pendulum_ph}
\end{equation}
Since
\begin{equation}
  \nabla\mathcal{H}
  =
  \begin{bmatrix}
  \sin\theta\\
  p
  \end{bmatrix},
  \label{eq:pendulum_fields}
\end{equation}
the dynamics are
\begin{equation}
  \dot\theta=p,
  \qquad
  \dot p=-\sin\theta-bp+u,
  \qquad
  y=p.
  \label{eq:pendulum_dynamics}
\end{equation}
The Hamiltonian brackets are
\begin{equation}
  \{\mathcal{H},\mathcal{H}\}_J=0,
  \qquad
  [\mathcal{H},\mathcal{H}]_R=bp^2.
  \label{eq:pendulum_energy_brackets}
\end{equation}
Therefore
\begin{equation}
  \mathcal{K}_{\mathbf{u}}\mathcal{H}
  =
  -bp^2+y^\intercal u
  =
  -bp^2+pu.
  \label{eq:pendulum_energy_balance}
\end{equation}
This is the pH energy identity for the Hamiltonian
observable. It does not assert that $\mathcal{K}_R$ is
positive on all observables.

Now choose
\begin{equation}
  \Psi(\theta,p)
  =
  \begin{bmatrix}
  \sin\theta\\
  p\\
  \cos\theta\\
  \mathcal{H}(\theta,p)
  \end{bmatrix}.
  \label{eq:pendulum_dictionary}
\end{equation}
This dictionary is useful for illustrating generator actions,
but it is not closed under the generator. Direct calculation
gives
\begin{align}
  \mathcal{K}_J\sin\theta &= p\cos\theta,
  & \mathcal{K}_R\sin\theta &= 0, \nonumber\\
  \mathcal{K}_J p &= -\sin\theta,
  & \mathcal{K}_R p &= bp, \nonumber\\
  \mathcal{K}_J\cos\theta &= -p\sin\theta,
  & \mathcal{K}_R\cos\theta &= 0, \nonumber\\
  \mathcal{K}_J\mathcal{H} &= 0,
  & \mathcal{K}_R\mathcal{H} &= bp^2.
  \label{eq:pendulum_gen}
\end{align}
The identities involving $\mathcal{H}$ reproduce
\eqref{eq:pendulum_energy_balance}. The remaining identities
are generator identities for specific observables. The fact
that $\mathcal{K}_R\mathcal{H}=bp^2\ge0$ is special to the
Hamiltonian observable and does not imply positivity of
$\mathcal{K}_R$ on arbitrary observables.

The dictionary~\eqref{eq:pendulum_dictionary} is not closed:
$p\cos\theta$ and $p\sin\theta$ are not generally in
$\operatorname{span}\{\sin\theta,p,\cos\theta,\mathcal{H}\}$.
Because the constant observable is not included,
$p^2=2\mathcal{H}-2+2\cos\theta$ is also not in the listed span.
Adding $1$ would close this particular term, although
$p\cos\theta$ and $p\sin\theta$ would still remain outside the
span.
Thus the lifted derivative
$\dot z=D\Psi(\theta,p)\dot{\mathbf{x}}$ cannot generally be
represented exactly by a linear finite-dimensional model in
this dictionary. The residual $r_N$ introduced in
Section~\ref{sec:approximation} is nonzero in general. This
illustrates the difference between exact bracket identities
and exact finite-dimensional closure. The weak identities of
Proposition~\ref{prop:splitting} remain exact at the
population level for the chosen test functions and measure,
but trajectory closure of the finite-dimensional dictionary is
a separate approximation question.

Let
\[
  \psi_1=\sin\theta,\qquad
  \psi_2=p,\qquad
  \psi_3=\cos\theta,\qquad
  \psi_4=\mathcal{H}.
\]
Since
\[
  [\psi_i,\psi_j]_R
  =
  b\,\partial_p\psi_i\,\partial_p\psi_j,
  \qquad
  \partial_p\Psi
  =
  \begin{bmatrix}
  0\\
  1\\
  0\\
  p
  \end{bmatrix},
\]
the Dirichlet matrix is
\begin{equation}
  D_R
  =
  b\int
  \begin{bmatrix}
  0\\
  1\\
  0\\
  p
  \end{bmatrix}
  \begin{bmatrix}
  0 & 1 & 0 & p
  \end{bmatrix}
  d\nu .
  \label{eq:pendulum_DR_integral}
\end{equation}
On the cylinder $S^1\times\mathbb{R}$, or on a compact sampled
region with finite moments, if $\nu$ is a probability measure this
becomes
\begin{equation}
  D_R
  =
  b
  \begin{bmatrix}
  0 & 0 & 0 & 0\\
  0 & 1 & 0 & m_1\\
  0 & 0 & 0 & 0\\
  0 & m_1 & 0 & m_2
  \end{bmatrix}.
  \label{eq:pendulum_DR_matrix}
\end{equation}
Here
\[
  m_1=\int p\,d\nu,\qquad
  m_2=\int p^2\,d\nu.
\]
For any $c\in\mathbb{R}^4$,
\begin{equation}
  c^\intercal D_R c
  =
  b\int(c_2+c_4p)^2\,d\nu
  \ge0.
  \label{eq:pendulum_DR_psd}
\end{equation}
This is the corrected finite-dimensional dissipative object:
$D_R$ is positive semidefinite because it represents the
metric bracket.

The first-order dissipative derivation matrix is different:
\begin{equation}
  (C_R)_{ij}
  =
  \int\psi_i[\psi_j,\mathcal{H}]_R\,d\nu.
  \label{eq:pendulum_CR_definition}
\end{equation}
For the pendulum,
\begin{align*}
  [\psi_1,\mathcal{H}]_R&=0,
  &[\psi_2,\mathcal{H}]_R&=bp,\\
  [\psi_3,\mathcal{H}]_R&=0,
  &[\psi_4,\mathcal{H}]_R&=bp^2.
\end{align*}
Thus only the second and fourth columns are generally
nonzero:
\begin{equation}
  (C_R)_{i2}
  =
  b\int\psi_i p\,d\nu,
  \qquad
  (C_R)_{i4}
  =
  b\int\psi_i p^2\,d\nu.
  \label{eq:pendulum_CR_columns}
\end{equation}
The matrix $C_R$ depends on mixed moments of the sampling
measure. It is generally not equal to $D_R$, and it is
generally neither symmetric nor positive semidefinite. The
finite-dimensional positive metric-bracket matrix is $D_R$,
not $C_R$.

Given sampled data, the empirical Dirichlet matrix can be
estimated by
\begin{equation}
  \widehat D_R
  =
  \frac{b}{N_{\rm samp}}
  \sum_{k=1}^{N_{\rm samp}}
  \begin{bmatrix}
  0\\
  1\\
  0\\
  p_k
  \end{bmatrix}
  \begin{bmatrix}
  0 & 1 & 0 & p_k
  \end{bmatrix}.
  \label{eq:pendulum_empirical_DR}
\end{equation}
This empirical matrix is positive semidefinite by
construction. A lifted pH surrogate can be fit using the
constrained least-squares formulation of
Section~\ref{sec:approximation}. The residual norm
\[
  \sum_k
  \left\|
  \dot z_k-(S-D)Pz_k-Bu_k
  \right\|^2
\]
measures closure and modeling error for the chosen
dictionary. Finite data are used only to form approximate
estimates for this chosen model class.

\subsection{What the Examples Demonstrate}

The linear pH example shows that with state coordinates and
the correct storage metric $P=Q$, the lifted pH surrogate is
exact and has zero residual. It also shows that exact pH
representation is a vector-field statement; it should not be
confused with positivity of the weak matrix $C_R$. The
pendulum example shows that the Hamiltonian identity
\[
  \mathcal{K}_{\mathbf{u}}\mathcal{H}
  =
  -[\mathcal{H},\mathcal{H}]_R+y^\intercal u
\]
holds exactly, while finite dictionaries can still have
nonzero closure residuals. The Dirichlet matrix $D_R$ remains
positive semidefinite because it represents the metric
bracket, whereas $C_R$ generally has no sign guarantee. These
examples support the revised thesis: preserve
bracket/Dirichlet and pH surrogate structure, but avoid
claiming positivity of the dissipative derivation or exact
finite-dimensional Koopman closure.


\section{Conclusion}
\label{sec:conclusion}

This paper studied how the vector-field structure of
port-Hamiltonian systems appears in the infinitesimal Koopman
generator. For smooth observables the input-parametrized generator
decomposes as
\begin{equation}
  \mathcal{K}_{\mathbf{u}}f
  =
  \{f,\mathcal{H}\}_J
  -
  [f,\mathcal{H}]_R
  +
  \nabla f^\intercal\mathbf{G}\mathbf{u}.
  \label{eq:conclusion_bracket_decomposition}
\end{equation}
The conservative component is formally skew only under the
conservative-measure and boundary assumptions. Dissipative
positivity is metric-bracket and Hamiltonian energy-balance
positivity, not positivity of the first-order derivation
$\mathcal{K}_R$ on arbitrary observables.

For finite-dimensional approximation, the weak Galerkin
projection must be stated with the mass matrix
$M_{ij}=\langle\psi_i,\psi_j\rangle_\nu$. Under the conservative
measure condition the weak matrix satisfies
$B_J=-B_J^\intercal$, yielding the coefficient-space identity
$A_J^\intercal M+MA_J=0$. The dissipative metric bracket yields
the positive semidefinite Dirichlet matrix $D_R$, while the
first-order dissipative derivation matrix
$(C_R)_{ij}=\langle\psi_i,\mathcal{K}_R\psi_j\rangle_\nu$ is a
different object and generally has no sign guarantee. Population
identities are exact for the chosen Galerkin measure $\nu$;
empirical matrices approximate those population quantities at
finite sample size. Projection and closure residuals quantify the
gap between exact lifted derivatives and a finite-dimensional
surrogate.

The finite-dimensional control model used here is the lifted pH
surrogate
\[
  \begin{aligned}
  \dot z&=(S-D)Pz+B\mathbf{u},\\
  S&=-S^\intercal,\quad
  D=D^\intercal\succeq0,\quad
  P=P^\intercal\succ0,
  \end{aligned}
\]
with storage
$\widetilde{\mathcal{H}}(z)=\frac12z^\intercal Pz$. This surrogate
is passive, and damping injection gives nonincrease of the
surrogate storage; asymptotic stability of $z=0$ follows under the
stated zero-dissipation detectability condition. Conclusions in
the original state require residual bounds together with
injectivity or local observability of the lifting map. The MPC
formulation is therefore a surrogate-MPC formulation; its
closed-loop guarantees depend on standard terminal ingredients.
The examples illustrate zero closure residual for
linear pH systems in suitable coordinates and nonzero closure
residual for a nonlinear pendulum dictionary, while confirming the
distinction between the first-order derivation matrix $C_R$ and
the positive Dirichlet matrix $D_R$.

Future work includes residual-aware error bounds connecting
surrogate passivity to plant passivity, dictionary design for
pH-compatible Koopman approximation, stochastic and noisy-data
estimates of weak bracket matrices, passivity-preserving
discretization, and larger numerical studies of nonlinear
pH systems.


\bibliographystyle{IEEEtran}
\bibliography{references}

@book{van2006port,
  author    = {van der Schaft, Arjan},
  title     = {$L_2$-Gain and Passivity Techniques in
               Nonlinear Control},
  edition   = {3rd},
  publisher = {Springer},
  address   = {Cham},
  year      = {2017},
  doi       = {10.1007/978-3-319-49992-5}
}

@article{schaft2014port,
  author    = {van der Schaft, Arjan and Jeltsema, Dimitri},
  title     = {Port-{H}amiltonian Systems Theory:
               An Introductory Overview},
  journal   = {Foundations and Trends in Systems and Control},
  volume    = {1},
  number    = {2--3},
  pages     = {173--378},
  year      = {2014},
  doi       = {10.1561/2600000002}
}

@article{ortega2001interconnection,
  author    = {Ortega, Romeo and van der Schaft, Arjan and
               Maschke, Bernhard and Escobar, Gerardo},
  title     = {Interconnection and Damping Assignment
               Passivity-Based Control of Port-Controlled
               {H}amiltonian Systems},
  journal   = {Automatica},
  volume    = {38},
  number    = {4},
  pages     = {585--596},
  year      = {2002},
  doi       = {10.1016/S0005-1098(01)00278-3}
}

@article{ortega2001energyshaping,
  author    = {Ortega, Romeo and van der Schaft, Arjan and
               Mareels, Iven and Maschke, Bernhard},
  title     = {Putting energy back in control},
  journal   = {{IEEE} Control Systems Magazine},
  volume    = {21},
  number    = {2},
  pages     = {18--33},
  year      = {2001}
}

@article{macchelli2004modeling,
  author    = {Macchelli, Alessandro and Melchiorri, Claudio},
  title     = {Modeling and Control of the {T}imoshenko Beam:
               The Distributed Port {H}amiltonian Approach},
  journal   = {{SIAM} Journal on Control and Optimization},
  volume    = {43},
  number    = {2},
  pages     = {743--767},
  year      = {2004},
  doi       = {10.1137/S0363012903429530}
}

@book{jacob2012linear,
  author    = {Jacob, Birgit and Zwart, Hans J.},
  title     = {Linear Port-{H}amiltonian Systems on
               Infinite-dimensional Spaces},
  publisher = {Birkh{\"a}user},
  address   = {Basel},
  year      = {2012},
  doi       = {10.1007/978-3-0348-0399-1}
}

@article{koopman1931hamiltonian,
  author    = {Koopman, Bernard O.},
  title     = {Hamiltonian Systems and Transformation
               in {H}ilbert Space},
  journal   = {Proceedings of the National Academy of
               Sciences},
  volume    = {17},
  number    = {5},
  pages     = {315--318},
  year      = {1931},
  doi       = {10.1073/pnas.17.5.315}
}

@article{mezic2005spectral,
  author    = {Mezi\'c, Igor},
  title     = {Spectral Properties of Dynamical Systems,
               Model Reduction and Decompositions},
  journal   = {Nonlinear Dynamics},
  volume    = {41},
  number    = {1--3},
  pages     = {309--325},
  year      = {2005},
  doi       = {10.1007/s11071-005-2824-x}
}

@article{mezic2013analysis,
  author    = {Mezi\'c, Igor},
  title     = {Analysis of Fluid Flows via Spectral
               Properties of the {K}oopman Operator},
  journal   = {Annual Review of Fluid Mechanics},
  volume    = {45},
  pages     = {357--378},
  year      = {2013},
  doi       = {10.1146/annurev-fluid-011212-140652}
}

@article{brunton2022modern,
  author    = {Brunton, Steven L. and Budi{\v{s}}i{\'c}, Marko
               and Kaiser, Eurika and Kutz, J. Nathan},
  title     = {Modern {K}oopman Theory for Dynamical Systems},
  journal   = {{SIAM} Review},
  volume    = {64},
  number    = {2},
  pages     = {229--340},
  year      = {2022},
  doi       = {10.1137/21M1401243}
}

@book{eisner2015operator,
  author    = {Eisner, Tanja and Farkas, B{\'a}lint and
               Haase, Markus and Nagel, Rainer},
  title     = {Operator Theoretic Aspects of Ergodic Theory},
  publisher = {Springer},
  address   = {Cham},
  year      = {2015},
  doi       = {10.1007/978-3-319-16898-2}
}

@book{lasota1994chaos,
  author    = {Lasota, Andrzej and Mackey, Michael C.},
  title     = {Chaos, Fractals, and Noise: Stochastic
               Aspects of Dynamics},
  edition   = {2nd},
  publisher = {Springer},
  address   = {New York},
  year      = {1994},
  doi       = {10.1007/978-1-4612-4286-4}
}

@article{klus2020generator,
  author    = {Klus, Stefan and N{\"u}ske, Feliks and
               Peitz, Sebastian and Niemann, Jan-Hendrik and
               Clementi, Cecilia and Sch{\"u}tte, Christof},
  title     = {Data-driven Approximation of the {K}oopman
               Generator: Model Reduction, System
               Identification, and Control},
  journal   = {Physica D: Nonlinear Phenomena},
  volume    = {406},
  year      = {2020},
  note      = {Art. no. 132416, doi: 10.1016/j.physd.2020.132416},
  doi       = {10.1016/j.physd.2020.132416}
}

@article{willems1972dissipative,
  author    = {Willems, Jan C.},
  title     = {Dissipative Dynamical Systems {P}art {I}:
               General Theory},
  journal   = {Archive for Rational Mechanics and Analysis},
  volume    = {45},
  number    = {5},
  pages     = {321--351},
  year      = {1972},
  doi       = {10.1007/BF00276493}
}

@article{korda2018linear,
  author    = {Korda, Milan and Mezi\'c, Igor},
  title     = {Linear Predictors for Nonlinear Dynamical
               Systems: {K}oopman Operator Meets Model
               Predictive Control},
  journal   = {Automatica},
  volume    = {93},
  pages     = {149--160},
  year      = {2018},
  doi       = {10.1016/j.automatica.2018.03.046}
}

@inproceedings{junker2022,
  author    = {Junker, R. and Drgona, J. and
               Maddalena, E. T. and Salzmann, C.
               and Jones, C. N.},
  title     = {Enforcing Stability in {K}oopman Operator
               Identified Systems via {LMI} Constraints},
  booktitle = {{IFAC}-{P}apersOnLine},
  volume    = {56},
  number    = {2},
  pages     = {3051--3057},
  year      = {2023}
}

@article{yildiz2024,
  author    = {Y{\i}ld{\i}z, S{\"u}leyman and Goyal, Pawan and
               Bendokat, Thomas and Benner, Peter},
  title     = {Data-Driven Identification of Quadratic
               Representations for Nonlinear {H}amiltonian
               Systems Using Weakly Symplectic Liftings},
  journal   = {Journal of Machine Learning for Modeling and
               Computing},
  volume    = {5},
  number    = {2},
  pages     = {45--71},
  year      = {2024},
  doi       = {10.1615/JMachLearnModelComput.2024052810}
}

@article{morandin2023,
  author    = {Morandin, Riccardo and Nicodemus, Jonas
               and Unger, Benjamin},
  title     = {Port-{H}amiltonian Dynamic Mode Decomposition},
  journal   = {{SIAM} Journal on Scientific Computing},
  volume    = {45},
  number    = {4},
  pages     = {A1690--A1710},
  year      = {2023},
  note      = {doi: 10.1137/22M149329X},
  doi       = {10.1137/22M149329X}
}

@article{cherifi2022,
  author    = {Cherifi, Karim and Goyal, Pawan
               and Benner, Peter},
  title     = {A Non-Intrusive Method to Inferring Linear
               Port-{H}amiltonian Realizations Using
               Time-Domain Data},
  journal   = {Electronic Transactions on Numerical Analysis,
               Special Issue SciML},
  volume    = {56},
  pages     = {102--116},
  year      = {2022},
  note      = {doi: 10.1553/etna\_vol56s102},
  doi       = {10.1553/etna_vol56s102}
}

@article{hara2021,
  author    = {Hara, Kenji and Inoue, Masaki},
  title     = {Learning Passive Systems from Data and
               Its Physical Interpretation},
  journal   = {{IEEE} Control Systems Letters},
  volume    = {6},
  pages     = {2677--2682},
  year      = {2022},
  doi       = {10.1109/LCSYS.2022.3173467}
}

@book{khalil2002nonlinear,
  author    = {Khalil, Hassan K.},
  title     = {Nonlinear Systems},
  edition   = {3rd},
  publisher = {Prentice Hall},
  year      = {2002}
}

\end{document}